\newtheorem{theorem}{Theorem}
\theoremstyle{definition}
\newtheorem{definition}[theorem]{Definition}
\newtheorem{remark}[theorem]{Remark}
\newtheorem{example}[theorem]{Example}
\begin{document}

\title{Deep Learning for Two-Sided Matching}

\author[a]{Sai Srivatsa Ravindranath}
\author[a]{Zhe Feng}
\author[c]{Shira Li}
\author[b]{Jonathan Ma}
\author[c]{Scott D.~Kominers}
\author[a]{David C.~Parkes}

\affil[a]{John A.~Paulson School of Engineering and Applied Sciences, Harvard University \authorcr \texttt{saisr,zhe\_feng,parkes@g.harvard.edu}}
\affil[b]{Harvard College \authorcr\texttt{jonathan.q.ma@gmail.com}}
\affil[c]{Harvard Business School \authorcr \texttt{shirali@g.harvard.edu, kominers@fas.harvard.edu}}

\maketitle 

\begin{abstract}
We initiate the study of deep learning for the automated design of two-sided matching mechanisms. What is of most interest is to 
use machine learning to understand 
the possibility of new tradeoffs between {\em strategy-proofness} and {\em stability}. 
These properties cannot be achieved simultaneously, but the efficient frontier is not understood. 
We introduce novel differentiable surrogates for quantifying ordinal strategy-proofness and stability and use them to train differentiable matching mechanisms that map discrete preferences to valid randomized matchings. We demonstrate that the efficient frontier characterized by these learned mechanisms is substantially better than that achievable through a convex combination of baselines of {\em deferred acceptance} (stable and strategy-proof for only one side of the market), {\em top trading cycles} (strategy-proof for one side, but not stable), and {\em randomized serial dictatorship} (strategy-proof for both sides, but not stable). 
This gives a new target for economic theory and opens up new possibilities for machine learning pipelines in matching market design.
\end{abstract}

\section{Introduction}

Two-sided matching markets, classically used for settings such as high-school matching, medical residents matching, and law clerk matching, and more recently used in online platforms such as Uber, Lyft, Airbnb, and dating apps, play a significant role in today's world. As a result, there is a significant interest in designing better mechanisms for two-sided matching. 

The seminal work of~\cite{GS62} introduces a simple mechanism for stable, one-to-one matching in two-sided markets---\emph{deferred-acceptance} (DA)---which has been applied in many settings, including doctor-hospital matching~\citep{RothPeranson99}, school choice~\citep{abdulkadiroglu2003school,pathak2008leveling,APR2009}, and cadet-matching~\citep{sonmez2011matching,sonmez2011bidding}.
The DA mechanism is \emph{stable}, i.e., no pair of participants prefer each other to their match (or to being unmatched, if they are unmatched in the outcome). However, the DA mechanism is not \emph{strategy-proof} (SP),
and a participant can
sometimes misreport their preferences to obtain a better outcome
 (although it is SP
for participants on one side of the market).
Although widely used, this failure of SP for the DA mechanism 
presents a challenge for two main reasons. First, it can lead to unfairness, where better-informed participants
can gain an advantage in knowing which misreport
strategies can be helpful.
 Second, strategic behavior can lead to lower quality, unintended outcomes,
 and outcomes that are unstable with respect to true preferences.

In general, it is well-known
that there must necessarily be a tradeoff between stability and strategy-proofness: 
it is provably 
impossible for a mechanism to achieve 
both stability and strategy-proofness~\citep{DubinsFreeman1981,Roth82}. 
A second example of a matching mechanism 
is 
\emph{random serial dictatorship} (RSD)~\citep{AS98}, 
which is typically adopted for one-sided assignment problems rather than two-sided matching.
When adapted to two-sided matching, RSD 
is SP but not stable. In fact, 
a participant may even prefer to remain unmatched
than participate in the outcome of the matching.
A third example of a matching mechanism 
is 
the {\em top trading cycles} (TTC) mechanism~\citep{SHAPLEY197423},
also typically adopted for one-sided assignment problems rather than 
problems of two-sided matching.
In application to two-sided matching, TTC is neither SP nor stable (although it is SP
for participants on one side of the market).

There have been various research efforts to circumvent this impossibility result. Some relax the definition of strategyproofness~\citep{mennle2021partial} while others characterize the constraints under which stability and strategyproofness are achieved simultaneously~\citep{kamada2018stability, RePEc:oup:restud:v:88:y:2021:i:3:p:1457-1502.,10.1257/0002828054825466}. 
The tradeoff between these desiderata remains poorly understood beyond the existing point solutions of DA, RSD, and TTC. However, we argue that real-world scenarios demand a more nuanced approach that considers both properties. The case of the Boston school choice mechanism highlights the negative consequences of lacking strategy-proofness, resulting in unfair manipulations by specific parents~\citep{RePEc:nbr:nberwo:11965}.
At the same time, the importance of stability in matching markets is well understood~\citep{stable_unravel}.

Recognizing this, and inspired by the success of 
 deep learning in the
 study of revenue-optimal auction design~\citep{deep-auction19,curry2020certifying,STZ19,rahme2020permutationequivariant}, 
we initiate the study of deep learning for the design of two-sided matching mechanisms 
{\em We ask whether deep learning
frameworks can enable a systematic study of this tradeoff.}~By answering this question affirmatively, we open up the possibility of using machine learning pipelines to open up new opportunities for economic theory---seeking theoretical characterizations
of mechanisms that can strike a new balance between strategyproofness and stability. 

We use a neural network to represent the rules of a matching mechanism, mapping preference reports to a distribution over feasible matchings, and show how we can use an unsupervised learning pipeline to characterize the efficient frontier for the design tradeoff between stability and SP.
The main methodological challenge in applying neural networks to two-sided matching comes from handling the ordinal preference inputs (the corresponding inputs are cardinal in auction design) and identifying suitable, differentiable surrogates for approximate strategy-proofness and approximate stability.

We work with randomized matching mechanisms, for which the strongest SP concept is {\em ordinal strategy-proofness}. This aligns incentives with truthful reporting, whatever an agent's utility function (i.e., for any cardinal preferences consistent with
an agent's ordinal preferences). Ordinal SP is equivalent to the property of {\em first-order stochastic dominance} (FOSD)~\citep{Erdil14}, which 
suitably defines the property that 
an agent has a 
better chance of getting their top, top-two, top-three, and so forth choices when they report truthfully. 
As a surrogate for SP, we quantify during training the degree to which FOSD is violated. For this, we adopt an {\em adversarial learning approach}, augmenting the training data with defeating misreports that reveal the violation of FOSD.
We also define a suitable surrogate to quantify the degree to which stability is violated. This surrogate aligns with the notion of {\em ex ante} stability---the strongest stability concept for randomized matching. 

We train neural network representations of matching mechanisms 
by adopting stochastic gradient descent (SGD) for loss functions that 
are defined on different convex combinations of the two surrogate quantities,
and construct the efficient frontier for stability and strategy-proofness for different market settings.
A further challenge with ordinal preference inputs as opposed to cardinal inputs 
arises because ordinal preferences are discrete. 
In the present work, we simply
 enumerate the possible
misreports of an agent
as a step
when evaluating the derivative of error 
for a particular training example. In contrast, the use of 
deep-learning approaches
for the design 
of revenue-optimal auctions 
works with 
a continuous 
space of agent valuations
and 
gradient-ascent in this adversarial step of identifying
useful misreports~\citep{deep-auction19}.
As discussed at the end of the paper,
the challenge that this presents in
scaling to large numbers of agents 
can be resolved 
by assuming a suitable structure on preference orderings
in the domain,
or on the language that is
made available to agents in
reporting their preferences to the mechanisms.

Our main experimental results demonstrate that this novel
use of deep learning can strike a much better trade-off between stability and SP than that achieved by a convex combination of the DA, TTC, and RSD mechanisms. Taken as a whole, these results suggest that deep learning pipelines can be used to identify new opportunities for matching theory.
For example, we identify mechanisms
that are provably almost as stable as DA and yet considerably more strategy-proof. We also identify mechanisms
that are provably almost as strategy-proof as RSD and yet considerably more stable.
These discoveries raise opportunities for future work in economic theory, in regard to understanding
the structure of these two-sided matching mechanisms as well as 
characterizing the preference distributions for which this is possible.

\noindent \section{Related work.~~~} \citet{DubinsFreeman1981} and~\citet{Roth82} show the impossibility of achieving both stability and SP in two-sided matching. \cite{alcalde1994top} also show the impossibility of individually rational, Pareto efficient, and SP allocation rules, and this work has been extended to randomized matching~\citep{alva2020impossibility}. RSD is 
 SP but may not be stable or even individually rational (IR)~\citep{AS98}. We will see that the top trading cycles (TTC) mechanism~\citep{SHAPLEY197423}, when applied in a two-sided context, is only SP for one side, and is neither stable nor IR. 
 The DA mechanism~\citep{GS62} is stable but not SP; see also~\citep{Roth93}, who study the polytope of stable matchings. The \textit{stable improvement cycles} mechanism~\citep{erdil2008s} achieves as much efficiency as possible on top of stability but fails to be SP even for one side of the market. Finally, a series of results show that DA becomes SP for both sides of the market in large-market limit contexts~\citep{immorlica2015incentives,kojima2009incentives,lee2016incentive}.

\citet{10.2307/45212389} discuss different stability and no envy concepts. We focus on {\em ex ante} stability~\citep{KestenUnver15}, also discussed by~\citep{Roth93} as {\em strong stability}. \citet{mennle2021partial} discuss different notions of approximate strategy-proofness in the context of matching and allocation problems. In this work, we focus on ordinal SP and its analog of FOSD~\citep{Erdil14}. This is a strong and widely used SP concept in the presence of ordinal preferences.
There are a lot of other desiderata, such as efficiency, that are also incompatible with strategyproofness.~\citet{mennle2017hybrid} study this trade-off through hybrid mechanisms which are convex combinations of a mechanism with good incentive properties with another which is efficient. In the context of social choice, other work studies the trade-off between approximate SP and desiderata, such as plurality and veto voting~\citep{TimoSven16}.

\citet{ConitzerS02,ConitzerS04a} introduced the automated mechanism design (AMD) approach that framed problems as a linear program. 
However, this approach faces severe scalability issues as the formulation scales exponentially in the number of agents and items~\citep{GuoC10}. 
Overcoming this limitation, 
more recent work seeks to use deep neural networks to address problems of economic design~\citep{deep-auction19, deep-budget, deep-facility, curry2020certifying,STZ19,rahme2020permutationequivariant, pmlr-v162-duan22a, ivanov2022optimal}, but not until now to matching problems. 
As discussed in the introduction, 
two-sided matching brings about new challenges, most notably 
in regard to
working with discrete,
ordinal preferences and adopting the right surrogate loss functions for approximate SP and approximate stability. 
Other work has made use of support vector machines to search for stable mechanisms, but without considering strategy-proofness~\citep{NarasimhanParkes16a}. 
A different line of research
is also considering 
stable matching together with bandits problems,
 where agent preferences are unknown {\em a priori}~\citep{das2005two, pmlr-v108-liu20c, dai2021learning, 10.5555/3546258.3546469, pmlr-v139-basu21a, pmlr-v130-sankararaman21a, jagadeesan2021learning, pmlr-v151-cen22a, min2022learn}.

There have also been other recent efforts that leverage deep learning for matching (in the context of online bipartite matching~\citep{alomrani2022deep}) and other related combinatorial optimization problems~\citep{BENGIO2021405}. Most of these papers adopt a reinforcement learning based approach to compute their solutions. Our approach, on the other hand, is not sequential but rather end-to-end differentiable, and our parameter weights are updated through a single backward pass. Additionally, the focus of our work is on matching markets and mechanism design, and is concerned with capturing core economic concepts within a machine learning framework and balancing the trade-offs between stability and strategy-proofness.

\section{Preliminaries}
\label{sec:prelim}

Let $W$ denote a set of $n$ {\em workers} and $F$ denote a set of $m$ {\em firms}. A feasible {\em matching}, $\mu$, is a set of (worker, firm) pairs, with each worker and firm participating in at most one match. Let ${\mathcal B}$ denote the set of all {\em matchings}. If $(w,f)\in \mu$, then $\mu$ matches $w$ to $f$, and we write $\mu(w)=f$ and $\mu(f)=w$. If a worker or firm remains unmatched, we say it is matched to $\bot$. We also write $(w,\bot)\in \mu$ (resp.~$(\bot,f)\in \mu$).
Each worker has a {\em strict preference order}, $\succ_w$, over the set $\overline{F} = F\cup \{\bot\}$. Each firm has a strict preference order, $\succ_f$, over the set $\overline{W} = W\cup \{\bot\}$.
Worker $w$ (firm $f$) prefers remaining unmatched to being matched with a firm (worker) ranked below $\bot$ (the agents ranked below $\bot$ are said to be {\em unacceptable}).
If worker $w$ prefers firm $f$ to $f'$, then we write $f \succ_w f'$, similarly for a firm's preferences. Let $P$ denote the set of all {\em preference profiles}, with $\succ=(\succ_1,\ldots,\succ_n,\succ_{n+1},\succ_{n+m})\in P$ denoting a preference profile comprising of the preference order of the $n$ workers and then the $m$ firms.

A pair $(w,f)$ forms a {\em blocking pair for matching $\mu$} if $w$ and $f$ prefer each other to their partners in $\mu$ (or $\bot$ in the case that one or both are unmatched). A matching $\mu$ is {\em stable} if and only if there are no blocking pairs.
A matching $\mu$ is {\em individually rational} (IR) if and only if it is not blocked by any individual; i.e., no agent finds its match unacceptable and prefers $\bot$.\footnote{Stability precludes empty matchings. For example, if a matching $\mu$ leaves a worker $w$ and a firm $f$ unmatched, where $w$ finds $f$ acceptable, and $f$ finds $w$ acceptable, then $(w,f)$ is a blocking pair to $\mu$.}

\subsection{Randomized matchings. } We work with {\em randomized matching mechanisms}, $g$, that map preference profiles, $\succ$, to distributions on matchings, denoted $g(\succ)\in \triangle({\mathcal B})$ (the probability simplex on matchings).
Let $r\in [0,1]^{(n+1)\times (m+1)}$ denote the {\em marginal probability}, $r_{wf}\geq 0$, with which worker $w$ is matched with firm $f$, for each $w\in \overline{W}$ and $f\in \overline{F}$. We require $\sum_{f'\in \overline{F}} r_{wf'}= 1$ for all $w\in W$, and $\sum_{w'\in \overline{W}}r_{w'f}=1$ for all $f\in F$. For notational simplicity, we write $g_{wf}(\succ)$ for the marginal probability of matching worker $w$ (or $\bot$) and firm $f$ (or $\bot$). 
\begin{theorem}[Birkhoff von-Neumann]
Given any randomized matching $r$, there exists a distribution on matchings, $\Delta({\mathcal B})$,
with marginal probabilities equal to $r$. 
\end{theorem}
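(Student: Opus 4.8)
The plan is to reduce the claim to the classical Birkhoff--von Neumann theorem for square doubly stochastic matrices. First I would discard the $\bot$ rows and columns and work with the $n\times m$ submatrix $\tilde r = (r_{wf})_{w\in W,\,f\in F}$. The worker and firm constraints give $\sum_{f\in F}\tilde r_{wf} = 1 - r_{w\bot}\le 1$ and $\sum_{w\in W}\tilde r_{wf} = 1 - r_{\bot f}\le 1$, so $\tilde r$ is doubly substochastic. I then embed $\tilde r$ into a doubly stochastic matrix of order $n+m$: introduce one dummy firm $d_w$ for each worker $w$ and one dummy worker $e_f$ for each firm $f$, place worker $w$'s unmatched mass $r_{w\bot}$ in entry $(w,d_w)$, place firm $f$'s unmatched mass $r_{\bot f}$ in entry $(e_f,f)$, set all other worker-to-dummy-firm and dummy-worker-to-firm entries to $0$, and fill the dummy-by-dummy block with any nonnegative matrix whose row sums are $1-r_{\bot f}$ and whose column sums are $1-r_{w\bot}$.

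The one point that needs checking is that this last block can be filled. A nonnegative matrix with prescribed nonnegative row and column sums exists iff the two totals agree, i.e.\ iff $m - \sum_{f} r_{\bot f} = n - \sum_{w} r_{w\bot}$. This identity follows by summing the worker constraints ($\sum_{w}\sum_{f'\in\overline{F}} r_{wf'} = n$) and the firm constraints ($\sum_{f}\sum_{w'\in\overline{W}} r_{w'f} = m$) and subtracting: the $W\times F$ contributions cancel. Feasibility of the block then follows from a greedy (north-west corner) construction, or by rescaling the rank-one matrix $(1-r_{\bot f})(1-r_{w\bot})/T$ when the common total $T$ is positive (and by the zero matrix when $T=0$). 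Call the resulting order-$(n+m)$ matrix $R$; by construction it is nonnegative with every row and column sum equal to $1$, hence doubly stochastic.

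Now apply the classical Birkhoff--von Neumann theorem to write $R = \sum_k \lambda_k P_k$ with $\lambda_k > 0$, $\sum_k\lambda_k = 1$, and each $P_k$ a permutation matrix on the $n+m$ augmented vertices. Because $R_{w,d_{w'}} = 0$ whenever $w'\ne w$ and $R_{e_{f'},f}=0$ whenever $f'\ne f$, and the $\lambda_k$ are positive, every $P_k$ has those entries zero as well; hence under $P_k$ each worker is matched to a real firm or to its own dummy, and symmetrically for firms. Restricting $P_k$ to $W\times F$ therefore yields a genuine matching $\mu_k\in\mathcal{B}$, the unmatched agents being exactly those assigned to their own dummy. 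Finally, the distribution $\sum_k \lambda_k\,\delta_{\mu_k}\in\triangle(\mathcal{B})$ has, for real $w,f$, marginal $\sum_k\lambda_k (P_k)_{wf} = R_{wf} = r_{wf}$, marginal $\sum_k \lambda_k (P_k)_{w,d_w} = r_{w\bot}$ for $w$ unmatched, and symmetrically $r_{\bot f}$ for $f$ unmatched, so its marginals equal $r$. The only real work is the padding bookkeeping in the second paragraph; everything else is immediate from the classical theorem. (Equivalently, one can phrase the whole argument as: $\tilde r$ doubly substochastic $\Rightarrow$ convex combination of partial permutation matrices $=$ matchings, citing the substochastic form of Birkhoff--von Neumann, whose proof is this same padding.)
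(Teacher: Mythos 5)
Your proof is correct and complete. Note that the paper itself gives no proof of this statement: it is presented as the classical Birkhoff--von Neumann theorem, and the relevant generalization to non-square, weakly doubly stochastic matrices is handled by citation (to Budish et al.~\cite{BCKM13}, invoked later in Section~\ref{sec:matching-methodology}). What you have written out is precisely the standard padding proof of that generalization: the identity $m-\sum_f r_{\bot f}=n-\sum_w r_{w\bot}$ guaranteeing the dummy block can be filled is verified correctly, the rank-one fill (or zero matrix when $T=0$) does achieve the prescribed row and column sums, and the observation that $R_{ij}=0$ forces $(P_k)_{ij}=0$ in any convex decomposition with positive weights is exactly what ensures each permutation restricts to a genuine partial matching in $\mathcal{B}$ with the unmatched agents identified by their own dummies. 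The marginal computation at the end matches the paper's definition of $r$, including the $\bot$ entries. So your argument supplies a self-contained proof where the paper relies on an external reference; there is nothing to fix.
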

The following definition is standard~\citep{BCKM13}, and generalizes stability to randomized matchings.
\begin{definition}[Ex ante justified envy]
\label{def:envy}
A randomized matching $r$ causes {\em ex ante justified envy} if: \newline
\hspace*{10pt} (1) some worker $w$ prefers $f$ over some fractionally matched firm $f'$ (including $f'=\bot$) and firm $f$ prefers $w$ over some fractionally matched worker $w'$ (including $w'=\bot$) (``$w$ has envy towards $w'$" and ``$f$ has envy towards $f'$"), or \newline
\hspace*{10pt} (2) some worker $w$ finds a fractionally matched $f'\in F$ unacceptable, i.e. $r_{wf'}>0$ and $\bot \succ_w f'$, or some firm $f$ finds a fractionally matched $w'\in W$ unacceptable, i.e. $r_{w'f}>0$ and $\bot \succ_f w'$.
\end{definition}

A randomized matching $r$ is {\em ex ante stable} if and only if it does not cause any {\em ex ante} justified envy. Ex ante stability reduces to the standard concept of stability for deterministic matching.
Part~(1) of the definition includes non-wastefulness: for any worker $w$, we should have $r_{w\bot} = 0$ if there exists some firm $f' \in F$ for which $r_{w'f'} > 0$ and $w \succ_{f'} w'$ and for for any firm $f$, we need $r_{\bot f} = 0$ if there exists some worker $w' \in W$ for which $r_{w'f'} > 0$ and $f \succ_{w'} f'$. Part~(2) of the definition captures IR: for any worker $w$, we should have $r_{wf'} = 0$ for all $f'\in F$ for which $\bot \succ_w f'$, and for any firm $f$, we need $r_{w'f} = 0$ for all $w'\in W$ for which $\bot \succ_f w'$. 

To define strategy-proofness, say that $u_w: \overline{F}\to \mathbb{R}$ is a {\em $\succ_w$-utility} for worker $w$ when $u_w(f)>u_w(f')$ if and only if $f\succ_w f'$, for all $f, f'\in \overline{F}$. We similarly define a $\succ_f$-utility for a firm $f$.
The following concept of ordinal SP is standard~\citep{Erdil14}, and generalizes SP to randomized matchings.
\begin{definition}[Ordinal strategy-proofness]\label{def:ordinal-sp}
A randomized matching mechanism $g$ satisfies
{\em ordinal SP} if and only if, for all agents $i\in W\cup F$, for any preference profile $\succ$, and any $\succ_i$-utility for agent $i$, 
and for all reports $\succ'_i$, we have
\begin{align}
 \mathbf{E}_{\mu\sim g(\succ_i, \succ_{-i})}[u_i(\mu(i))] \geq \mathbf{E}_{\mu\sim g(\succ'_i, \succ_{-i})}[u_i(\mu(i))]. 
\end{align} 
\end{definition}

By this definition, no worker or firm can improve their expected utility (for any utility function consistent with their preference order) by misreporting their preference order. For a deterministic mechanism, ordinal SP reduces to standard SP. \citet{Erdil14} shows that {\em first-order stochastic dominance} is equivalent to ordinal SP.
\begin{definition}[First Order Stochastic Dominance]\label{def:fosd-sp}
 A randomized matching mechanism $g$ satisfies
 {\em first order stochastic dominance} (FOSD) if and only if, for worker $w$, and each $f'\in \overline{F}$ such that $f' \succ_{w} \bot$,
 and all reports of others $\succ_{-w}$, we have (and similarly for the roles of workers and firms transposed),
 \begin{align}
\!\!\sum_{f\in F: f\succ_w f'} \!\!\!g_{wf}(\succ_w, \succ_{-w}) &\! \geq \!\!\!\sum_{f\in F: f\succ_w f'} \!\!g_{wf}(\succ'_w, \succ_{-w}).\! 
\end{align}
\end{definition}

FOSD states that, whether looking at its most preferred firm, its two most preferred firms, or so forth, worker $w$ achieves a higher probability of matching on that set of firms for its true report than for any misreport. We make use of a quantification of the violation of this condition to provide a surrogate for the failure of SP during learning.

\begin{theorem}[\citep{Erdil14}]\label{thm:fosd}
A two-sided matching mechanism is ordinal SP if and only if it satisfies FOSD.
\end{theorem}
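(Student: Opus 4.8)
The plan is to reduce the claimed equivalence to the classical fact that $X$ first-order stochastically dominates $Y$ if and only if $\mathbb{E}[\phi(X)] \ge \mathbb{E}[\phi(Y)]$ for every monotone $\phi$, specialized to the ordinal setting where the relevant $\phi$ are exactly the $\succ_i$-utilities. Since workers and firms play interchangeable roles in both Definition~\ref{def:ordinal-sp} and Definition~\ref{def:fosd-sp}, it suffices to establish the worker-side inequalities. Fix a worker $w$, the reports $\succ_{-w}$ of the others, and a candidate mis-report $\succ'_w$, and abbreviate $g_{wf}=g_{wf}(\succ_w,\succ_{-w})$ and $g'_{wf}=g_{wf}(\succ'_w,\succ_{-w})$. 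Enumerate $\overline{F}$ in decreasing $\succ_w$-order as $f_1 \succ_w f_2 \succ_w \cdots \succ_w f_{J}$ (this list contains $\bot$ and any unacceptable firms), and set $G_j = \sum_{k\le j} g_{wf_k}$ and $G'_j = \sum_{k\le j} g'_{wf_k}$; note $G_J = G'_J = 1$ because each report induces a probability distribution over matchings. Read at all the relevant cut points, the FOSD condition of Definition~\ref{def:fosd-sp} is precisely the family of inequalities $G_j \ge G'_j$.

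For the direction ``FOSD $\Rightarrow$ ordinal SP'', fix any $\succ_w$-utility $u_w$ and apply Abel summation to $\mathbb{E}_{\mu\sim g(\succ_w,\succ_{-w})}[u_w(\mu(w))] = \sum_{k=1}^{J} u_w(f_k)\, g_{wf_k}$, obtaining $\sum_{j=1}^{J-1}\bigl(u_w(f_j) - u_w(f_{j+1})\bigr) G_j + u_w(f_J)\,G_J$. Subtracting the analogous expression for $\succ'_w$, the boundary terms cancel since $G_J = G'_J = 1$, leaving $\sum_{j=1}^{J-1}\bigl(u_w(f_j) - u_w(f_{j+1})\bigr)\bigl(G_j - G'_j\bigr)$. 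Each coefficient is strictly positive because $u_w$ is consistent with the strict order $\succ_w$, and each factor $G_j - G'_j$ is nonnegative by FOSD, so the whole difference is $\ge 0$; this is exactly the inequality of Definition~\ref{def:ordinal-sp}, and transposing roles handles firms.

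For the converse I would argue the contrapositive. If FOSD fails then, after possibly transposing workers and firms, there is a worker $w$, a profile $\succ_{-w}$, a mis-report $\succ'_w$, and an acceptable firm $f' = f_{j^\star}$ (so that $\{f\in F : f\succ_w f'\} = \{f_1,\dots,f_{j^\star-1}\}$, with $j^\star < J$ since $\bot$ comes after $f'$) for which $\delta := G'_{j^\star-1} - G_{j^\star-1} > 0$. Take $u_w$ to be a perturbed indicator of ``matched strictly above $f'$'': set $u_w(f_k) = 1 + \epsilon(J-k)$ for $k < j^\star$ and $u_w(f_k) = \epsilon(J-k)$ for $k \ge j^\star$. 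For every $\epsilon > 0$ this is strictly $\succ_w$-consistent (and well-defined on $\bot$), and a direct computation gives $\mathbb{E}[u_w(\mu(w))] = G_{j^\star-1} + \epsilon\sum_k (J-k) g_{wf_k}$ under any report, where the $\epsilon$-term has absolute value below $\epsilon J$. Choosing $\epsilon < \delta/(2J)$ makes the mis-report $\succ'_w$ strictly better for $w$, contradicting ordinal SP.

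I expect the only delicate point to be the bookkeeping at the $\bot$ boundary: making sure the cumulative-dominance inequalities $G_j \ge G'_j$ are available at \emph{every} cut point occurring in the Abel expansion (including the acceptability threshold, and, for mechanisms that may match agents to unacceptable partners, below it), and, in the converse, checking that the ``indicator'' utility really can be perturbed into a strictly $\succ_w$-consistent one without disturbing the strict gap $\delta$. When the mechanism is individually rational (Part~(2) of Definition~\ref{def:envy}), every positive-probability outcome is an acceptable firm or $\bot$, so the cut points at and above $\bot$ exhaust the support and the argument is clean; in general one additionally uses $G_J = G'_J = 1$ to control the tail. The remaining steps are routine summation-by-parts calculations.
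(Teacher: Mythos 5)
The paper does not actually prove Theorem~\ref{thm:fosd}; it imports the equivalence from Erdil~\cite{Erdil14} as a black-box citation, so there is no in-paper argument to compare yours against. Taken on its own terms, your proof is the standard one and is essentially correct: summation by parts turns the expected-utility difference into $\sum_{j=1}^{J-1}\bigl(u_w(f_j)-u_w(f_{j+1})\bigr)\bigl(G_j-G'_j\bigr)$ with strictly positive coefficients, which gives FOSD $\Rightarrow$ ordinal SP, and the $\epsilon$-perturbed threshold utilities give the converse; your choice $\epsilon<\delta/(2J)$ and the verification that the perturbed utility remains strictly $\succ_w$-consistent are both fine.

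The one point you defer as ``delicate bookkeeping'' is in fact the only substantive issue, and it is slightly more than bookkeeping: under the literal reading of Definition~\ref{def:fosd-sp}, the forward direction fails for general (non-IR) mechanisms. The quantifier there runs only over acceptable $f'$ and the index set is the strict upper contour $\{f\in F: f\succ_w f'\}$, so writing $f_b=\bot$ in your enumeration, the available inequalities are $G_0\ge G'_0,\dots,G_{b-2}\ge G'_{b-2}$; the Abel expansion additionally needs the cut $G_{b-1}$ at the acceptability threshold and, for mechanisms that can match $w$ to unacceptable firms, the cuts below $\bot$. Concretely, a mechanism that under the truthful report puts all of $w$'s mass on an unacceptable firm and under some mis-report puts it on $\bot$ satisfies every displayed inequality of Definition~\ref{def:fosd-sp} vacuously, yet is not ordinal SP. Your argument closes once you either (i) read the FOSD cuts with weak preference in the index set, or equivalently admit $f'=\bot$ as a cut point (which supplies $G_{b-1}\ge G'_{b-1}$), and restrict attention to IR mechanisms---which is Erdil's setting and holds for the learned mechanisms here by construction of the mask $\beta$---or (ii) impose dominance at all $J-1$ cut points. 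You identified exactly this boundary; I would state the fix explicitly rather than leaving it as a routine remark.
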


\subsection{Deferred Acceptance, RSD, and TTC.}
We consider three benchmark mechanisms: the stable but not SP \emph{deferred-acceptance} (DA) mechanism, the SP but not stable \emph{randomized serial dictatorship} (RSD) mechanism, and the \emph{Top Trading Cycles} (TTC) mechanism, which is neither SP nor stable.
The DA and TTC mechanisms are ordinal SP for the proposing side of the market but not for agents on both sides of the market.

\begin{definition}[Deferred-acceptance (DA)]\label{def:DA}
In {\em worker-proposing deferred-acceptance} (firm-proposing is defined analogously), each worker $w$ maintains a list of acceptable firms ($f\succ_w \bot$) for which it has not had a proposal rejected (``remaining firms"). Repeat until all proposals are accepted: 
\begin{itemize}
\item $\forall w\in W$: $w$ proposes to its best acceptable, remaining firm.
\item $\forall f\in F$: $f$ tentatively accepts its best proposal (if any), and rejects the rest.
\item$\forall w\in W$: If $w$ is rejected by firm $f$, it updates its list of acceptable firms to remove $f$.
\end{itemize}
\end{definition}

\begin{theorem}[see \citep{RothSotomayor1990}]
DA is stable but not Ordinal SP.
\end{theorem}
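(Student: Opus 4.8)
The plan is to prove the two assertions separately. I will fix the worker-proposing version of DA (the firm-proposing case is symmetric, with the roles of the sides exchanged); then the first claim is the classical Gale--Shapley stability theorem, and the second follows from an explicit beneficial deviation by a firm, i.e., by an agent on the non-proposing side.

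\textbf{Stability.} First I would note that the procedure terminates: a firm is deleted from worker $w$'s list exactly when it rejects $w$, so each worker proposes to each firm at most once and there are at most $nm$ proposals in all. Individual rationality is immediate by construction, since workers only propose to acceptable firms and firms only ever hold acceptable proposals, so in the output matching $\mu$ no agent is matched to a partner ranked below $\bot$. The substantive step is the absence of blocking pairs, and it rests on the invariant that the tentative partner held by a firm $f$ can only improve with respect to $\succ_f$ as the algorithm proceeds (a firm always keeps its best proposal so far, and once it holds some worker it never becomes unmatched again). Suppose toward a contradiction that $(w,f)$ blocks $\mu$, so $f \succ_w \mu(w)$ (with $\mu(w)$ possibly $\bot$) and $w \succ_f \mu(f)$ (with $\mu(f)$ possibly $\bot$). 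Since $w$ proposes in decreasing $\succ_w$-order and ends up matched to $\mu(w)$ or unmatched, $w$ must have proposed to $f$ at some round; $f$ either rejected $w$ at once or accepted $w$ tentatively and rejected $w$ later, and at the moment of that rejection $f$ held some worker $w'$ with $w' \succ_f w$. By the monotonicity invariant $\mu(f) \succeq_f w' \succ_f w$, contradicting $w \succ_f \mu(f)$. Hence $\mu$ has no blocking pair and is stable.

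\textbf{Failure of FOSD strategy-proofness.} Because DA is deterministic, by Theorem~\ref{thm:fosd} FOSD strategy-proofness (Definition~\ref{def:fosd-sp}) is equivalent here to ordinal SP, which for a deterministic mechanism amounts to the requirement that no agent can obtain a strictly $\succ_i$-better partner by misreporting. I would then exhibit the instance with $W=\{w_1,w_2\}$, $F=\{f_1,f_2\}$ and preferences $f_1\succ_{w_1}f_2\succ_{w_1}\bot$, $f_2\succ_{w_2}f_1\succ_{w_2}\bot$, $w_2\succ_{f_1}w_1\succ_{f_1}\bot$, and $w_1\succ_{f_2}w_2\succ_{f_2}\bot$. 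On the truthful profile, worker-proposing DA immediately matches $w_1$ with $f_1$ and $w_2$ with $f_2$, so $f_1$ gets its less-preferred worker $w_1$. If instead $f_1$ submits the truncated report $w_2\succ'_{f_1}\bot\succ'_{f_1}w_1$, then $f_1$ rejects $w_1$'s first proposal; $w_1$ next proposes to $f_2$, which drops $w_2$ in favor of $w_1$; $w_2$ then proposes to $f_1$ and is accepted; the run ends with $f_1$ matched to $w_2$, which $f_1$ strictly prefers to $w_1$. Thus $f_1$ profitably manipulates, so DA is not FOSD strategy-proof.

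\textbf{Main obstacle.} Neither half is deep---this is essentially a textbook fact---so the only real care is (i) stating the firm-side monotonicity invariant precisely and handling the $\bot$/unmatched cases uniformly inside the blocking-pair argument, and (ii) tracing the DA execution on the manipulated profile and confirming that the gain for $f_1$ is a genuine ordinal (hence FOSD) improvement rather than merely a utility-dependent one; invoking Theorem~\ref{thm:fosd} together with the deterministic reduction of ordinal SP to ``no strictly better partner'' makes point (ii) routine.
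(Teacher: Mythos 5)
Your proof is correct. The paper does not actually prove this theorem---it cites Roth and Sotomayor for both halves---so there is no in-paper argument to compare against line by line; your stability argument is the standard Gale--Shapley one (firms' tentative partners only improve, so a blocking pair $(w,f)$ would contradict $f$ having rejected $w$ in favor of someone better), and it is sound, including the use of the blocking condition to rule out rejection-for-unacceptability. For the manipulability half, the paper's Appendix Example~\ref{ex:1} happens to contain the same kind of witness---a $3\times 3$ profile where firm $f_1$ profitably truncates under worker-proposing DA---whereas you give the more compact $2\times 2$ truncation example; both are valid, and your explicit reduction of FOSD to ``no strictly better partner'' for deterministic mechanisms (via Theorem~\ref{thm:fosd}) correctly closes the gap between ordinal manipulation and a FOSD violation.
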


\begin{definition}[Randomized serial dictatorship (RSD)]\label{def:RSD}
In the two-sided version of RSD, we first sample a {\em priority order}, $\pi$, on the set $W\cup F$, uniformly at random, such that $\pi = (\pi_1, \pi_2, \ldots, \pi_{m+n})$ is a permutation on $W\cup F$ in decreasing order of priority. For the one-sided version, we sample a priority order $\pi$ on either $W$ or $F$.

Proceed as follows:
\begin{itemize}
\item Initialize matching $\mu$ to the empty matching.
\item In round $k=1, \ldots, \vert\pi\vert$:
\begin{itemize}
 \item[--] If $\pi_k$ is not yet matched in $\mu$, then add to
 $\mu$ the match between $\pi_k$ and its most preferred
 unmatched agent, or $\bot$ if all
 remaining agents are unacceptable to $\pi_k$.
\end{itemize}
\end{itemize}
\end{definition}

\begin{theorem}\label{thm:rsd-unstable}
RSD satisfies FOSD---and thus is ordinal SP by Theorem~\ref{thm:fosd}---but is not stable. 
\end{theorem}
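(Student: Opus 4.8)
The plan is to establish the two assertions separately. For FOSD, fix a worker $w$ (the firm case is symmetric by Definition~\ref{def:fosd-sp}), a true order $\succ_w$, a misreport $\succ'_w$, the others' reports $\succ_{-w}$, and an acceptable firm $f'$ (so $f'\succ_w\bot$); set $A=\{f\in F: f\succ_w f'\}$, the upper set of firms $w$ strictly prefers to $f'$. The only randomness in RSD is the uniform priority order $\pi$, and $g_{wf}(\cdot)=\Pr_\pi[\mu(w)=f]$; writing $\mu_\pi$ and $\mu'_\pi$ for the matchings produced on inputs $(\succ_w,\succ_{-w})$ and $(\succ'_w,\succ_{-w})$ under the same draw $\pi$, it therefore suffices to prove the pointwise domination $\mathbf{1}[\mu_\pi(w)\in A]\ \ge\ \mathbf{1}[\mu'_\pi(w)\in A]$ for every $\pi$; summing over $f\in A$ and taking $\mathbb{E}_\pi$ then yields exactly the inequality of Definition~\ref{def:fosd-sp}, and Theorem~\ref{thm:fosd} upgrades FOSD to ordinal SP.

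The fact driving the pointwise domination — proved by induction on position in $\pi$, since each agent's action is a function only of $\pi$ and the actions of the agents ahead of it — is that the state of the execution just before $w$'s own turn (whether $w$ has already been matched by an earlier firm's choice, and otherwise the set $S$ of agents still available) depends only on $(\succ_{-w},\pi)$ and not on $w$'s report. (We take the standard reading of RSD in which an agent is removed from the pool once it has had its turn, so $\mu(w)$ is final thereafter.) Given this: if $w$ is already matched to some $f$ before its turn, then $\mu_\pi(w)=\mu'_\pi(w)=f$; otherwise $w$ selects its $\succ_w$-best (resp.\ $\succ'_w$-best) acceptable firm in $S$, or $\bot$. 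If $\mu'_\pi(w)=f\in A$, then $f\in S$ and $f$ is acceptable to $w$ (as $f\succ_w f'\succ_w\bot$), so in the truthful run $w$ selects some $f^\ast\in S$ with $f^\ast\succeq_w f$; as $A$ is closed upward under $\succ_w$ and $f\in A$, this gives $f^\ast\in A$. Hence $\mathbf{1}[\mu_\pi(w)\in A]\ge\mathbf{1}[\mu'_\pi(w)\in A]$ in every case.

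For non-stability it suffices to exhibit one profile on which RSD puts positive probability on an unstable matching. Take $W=\{w_1,w_2\}$, $F=\{f_1,f_2\}$ with $f_1\succ_{w_1}f_2\succ_{w_1}\bot$, $f_1\succ_{w_2}f_2\succ_{w_2}\bot$, $w_2\succ_{f_1}w_1\succ_{f_1}\bot$, and $w_1\succ_{f_2}w_2\succ_{f_2}\bot$. Worker-proposing DA and firm-proposing DA both return $\{(w_1,f_2),(w_2,f_1)\}$, so the worker- and firm-optimal stable matchings coincide and the stable matching is unique. But under $\pi=(w_1,w_2,f_1,f_2)$, which occurs with probability $1/24>0$, worker $w_1$ takes $f_1$ and then $w_2$ takes $f_2$, yielding $\{(w_1,f_1),(w_2,f_2)\}$, which is blocked by $(w_2,f_1)$ and hence unstable. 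So RSD is not stable.

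I expect the main obstacle to be the FOSD step — specifically, stating and proving cleanly the lemma that the execution state before $w$'s turn is independent of $w$'s report, and pinning down the convention for agents matched to $\bot$; once that is in hand, the reduction to a per-$\pi$ coupling and the case analysis are routine.
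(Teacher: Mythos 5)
Your argument is correct and follows essentially the same route as the paper's: the key observation in both is that an agent's report cannot influence the execution of RSD before its own turn, so truthful reporting secures the $\succ_w$-best available option and hence dominates every misreport on each upper set $A$ of firms; your per-$\pi$ coupling and upper-set indicator domination is just a more explicit rendering of the paper's one-paragraph sketch. The only difference is cosmetic: for non-stability you give a $2\times 2$ counterexample where the paper uses a $3\times 3$ one (Example~\ref{ex:1}), and both serve the same purpose since a blocked matching in the support yields ex ante justified envy.
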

\begin{proof}
We defer the proof to appendix~\ref{app:rsd-unstable}
\end{proof}

\begin{definition}[Top Trading Cycles (TTC)]\label{def:TTC}
In {\em worker-proposing TTC} (firm-proposing is defined analogously), each agent (worker or firm) maintains a list of acceptable firms. Repeat until all agents are matched:
\begin{itemize}
 \item Form a directed graph with each unmatched agent pointing to their most preferred option. The agents can point at themselves if there are no acceptable options available. Every worker that is a part of a cycle is matched to a firm it points to ( or itself, if the worker is pointing at itself). The unmatched agents remove from their lists every matched agent from this round.
\end{itemize}
\end{definition} 

\begin{theorem}\label{thm:ttc-unstable}
TTC is neither strategy-proof nor stable for both sides.
\end{theorem}
\begin{proof}
Example~\ref{ex:2} in Appendix~\ref{app:ttc-unstable} shows that TTC is neither strategyproof nor stable.
\end{proof}

\begin{remark} 
TTC, like RSD, is usually used in one-sided assignment problems, where it is SP, and where the notion of stability which is an important consideration in two-sided matching, is not a concern.
\end{remark}

\section{Two-Sided Matching as a Learning Problem}\label{sec:learning}

In this section, we develop the use of deep learning for the design of two-sided matching mechanisms. 
\subsection{Neural Network Architecture}\label{sec:matching-methodology}

\begin{figure*}[t]
\centering
\begin{tikzpicture}[scale=0.64, transform shape, shorten >=1pt,->,draw=black!100, node distance=\layersep, thick]
    \tikzstyle{input text}=[draw=white,minimum size=22pt,inner sep=0pt]
    \tikzstyle{input neuron}=[circle,draw=black!100,minimum size=17pt,inner sep=0pt,thick]
    \tikzstyle{hidden neuron}=[circle,draw=black!100,minimum size=25pt,inner sep=0pt,thick]
    \tikzstyle{hidden text}=[draw=white,minimum size=22pt,inner sep=0pt,thick]
    \tikzstyle{unit}=[circle,draw=black!100,minimum size=25pt,inner sep=0pt,thick]
    \tikzstyle{squnit}=[draw=black!100,minimum size=20pt,inner sep=0pt,thick]
     
    \def\x{0.25}
    \node[input text] (I-10) at (0,-0.65) {$\vdots$};
    \node[input text] (I-20) at (0,-4.2) {$\vdots$};
    \node[input neuron, pin={[pin edge={<-}]left:$p_{11}$}] (I-1) at (0,0) {};
    \node[input neuron, pin={[pin edge={<-}]left:$p_{nm}$}] (I-2) at (0,-1.5) {};
    \node[input neuron, pin={[pin edge={<-}]left:$q_{11}$}] (I-3) at (0,-3.5) {};
	\node[input neuron, pin={[pin edge={<-}]left:$q_{nm}$}] (I-4) at (0,-5) {};  
	
	\node[hidden text] (temp-1) at (-0.6, 0.75) {$\beta$};
    \node[hidden text] (temp-1) at (-0.6, -0.75) {$\beta$};
    \node[hidden text] (temp-1) at (-0.6, -2.75) {$\beta$};
    \node[hidden text] (temp-1) at (-0.6, -4.25) {$\beta$};
    
	\draw[<-,solid,line width=0.5mm,dotted,color=gray] (-0.05, 0.25)  -- (-0.55, 0.75 + 0.05); 
	\draw[<-,solid,line width=0.5mm,dotted,color=gray] (-0.05, -1.25)  -- (-0.55, -0.75 + 0.05); 
	\draw[<-,solid,line width=0.5mm,dotted,color=gray] (-0.05, -3.25)  -- (-0.55, -2.75 + 0.05); 
	\draw[<-,solid,line width=0.5mm,dotted,color=gray] (-0.05, -4.75)  -- (-0.55, -4.25 + 0.05);

	\node[input text] (temp-1111) at (2.25, -2.5) {$\ldots$};

    \path node[hidden text] (H-0) at (1.5,-3.25) {\vdots};
    \path node[hidden neuron] (H-1) at (1.5,-0.5) {$h^{(1)}_1$};
    \path node[hidden neuron] (H-2) at (1.5,-2) {$h^{(1)}_2$};
    \path node[hidden neuron] (H-3) at (1.5,-4.5) {$h^{(1)}_{J_1}$};

    \foreach \dest in {1,...,3}
    	\foreach \src in {1,...,4}
			\path (I-\src) edge (H-\dest); 		
    
    \path node[hidden text] (J-0) at (3,-3.25) {\vdots};
    \path node[hidden neuron] (J-1) at (3,-0.5) {$h^{(R)}_1$};
    \path node[hidden neuron] (J-2) at (3,-2) {$h^{(R)}_2$};
    \path node[hidden neuron] (J-3) at (3,-4.5) {$h^{(R)}_{J_{R}}$};

    \node[input text] (S-10) at (4.55,-0.65) {$\vdots$};
    \node[input text] (S-20) at (4.55,-4.2) {$\vdots$};
    \path node[unit] (S-1) at (4.55,-0) {$s_{11}$};
    \path node[unit] (S-2) at (4.55,-1.5) {$s_{n+1m}$};
    \path node[unit] (S-3) at (4.55,-3.5) {$s'_{11}$};
    \path node[unit] (S-4) at (4.55,-5.0) {$s'_{nm+1}$};

     \foreach \src in {1,...,3}
		\foreach \dest in {1,...,4}
			\path (J-\src) edge (S-\dest);

    \path node[squnit] (T-1) at (5.75,-0) {$\sigma_+$};
    \path node[squnit] (T-2) at (5.75,-1.5) {$\sigma_+$};
    \path node[squnit] (T-3) at (5.75,-3.5) {$\sigma_+$};
    \path node[squnit] (T-4) at (5.75,-5.0) {$\sigma_+$};
    
    \node[input text] (T-10) at (5.75,-0.65) {$\vdots$};
    \node[input text] (T-20) at (5.75,-4.2) {$\vdots$};
    
    \foreach \src in {1,...,4}
			\path (S-\src) edge (T-\src); 
			
	\path node[input neuron] (U-1) at (\x +7.0,0.5-0) {$\times$};
    \path node[input neuron] (U-2) at (\x +7.0,-0.25-1.5) {$\times$};
    \path node[input neuron] (U-3) at (\x +7.0,-0.25-3.5) {$\times$};
    \path node[input neuron] (U-4) at (\x +7.0,-0.25-5.0) {$\times$};
    
    \node[hidden text] (temp-1) at (6.5, 1.25) {$\beta$};
    \node[hidden text] (temp-1) at (6.5, -1.0) {$\beta$};
    \node[hidden text] (temp-1) at (6.5, -3.0) {$\beta$};
    \node[hidden text] (temp-1) at (6.5, -4.5) {$\beta$};
    
	\draw[<-,solid,line width=0.5mm,dotted,color=gray] (7.05, 0.75)  -- (6.55, 1.25 + 0.05); 
	\draw[<-,solid,line width=0.5mm,dotted,color=gray] (7.05, -1.5)  -- (6.55, -1.0 + 0.05); 
	\draw[<-,solid,line width=0.5mm,dotted,color=gray] (7.05, -3.5)  -- (6.55, -3.0 + 0.05); 
	\draw[<-,solid,line width=0.5mm,dotted,color=gray] (7.05, -5.0)  -- (6.55, -4.5 + 0.05);
    
     \foreach \src in {1,...,4}
			\path (T-\src) edge (U-\src);
    
    \path node[hidden neuron] (V-1) at (\x +8.25,0.5-0) {$\bar s_{11}$};
    \path node[hidden neuron] (V-2) at (\x +8.25,0.5-1.5) {$\bar s_{n1}$};
    \path node[hidden neuron] (V-25) at (\x +8.25,0.5-2.75) {$\bar s_{\bot 1}$};
    \path node[hidden neuron] (V-3) at (\x +8.25,0.25-4.0) {$\bar s'_{11}$};
    \path node[hidden neuron] (V-4) at (\x +8.25,0.25-5.5) {$\bar s'_{n1}$};

	\path (U-1) edge (V-1);
	\path (U-2) edge (V-25);
	\path (U-3) edge (V-3);
	\path (U-4) edge (V-4);

	\path node[hidden neuron] (W-1) at (\x +10.25,0.5-0) {$\bar s_{1m}$};
    \path node[hidden neuron] (W-2) at (\x +10.25,0.5-1.5) {$\bar s_{nm}$};
    \path node[hidden neuron] (W-25) at (\x +10.25,0.5-2.75) {$\bar s_{\bot m}$};
    \path node[hidden neuron] (W-3) at (\x +10.25,0.25-4.0) {$\bar s'_{1m}$};
    \path node[hidden neuron] (W-4) at (\x +10.25,0.25-5.5) {$\bar s'_{nm}$};
    \path node[hidden neuron] (W-35) at (\x +11.55,0.25-4.0) {$\bar s'_{1\bot}$};
    \path node[hidden neuron] (W-45) at (\x +11.55,0.25-5.5) {$\bar s'_{n\bot}$};

    
    \node[input text] (W-11) at (\x +9.25,0.5-0) {$\ldots$};
    \node[input text] (W-12) at (\x +9.25,0.5-1.5) {$\ldots$};
    \node[input text] (W-13) at (\x +9.25,-0.25-3.5) {$\ldots$};
    \node[input text] (W-14) at (\x +9.25,-0.25-5) {$\ldots$};
    \node[input text] (W-15) at (\x +9.25,-1.5-0.65) {$\vdots$};
    \node[input text] (W-16) at (\x +9.25,-0.25-4.2) {$\vdots$};

    \node[input text] (temp-111) at (\x +12.5+1.0,-0.65) {$\vdots$};
    \node[input text] (temp-222) at (\x +12.5+1.0,-2.42) {$\vdots$};
    \node[input text] (temp-333) at (\x +12.5+1.0,-4.2) {$\vdots$};
    \path node[unit, pin={[pin edge={->}]right:$r_{11} = \min\{\hat{s}_{11},\hat{s}'_{11}\}$}] (X-1) at (\x +12.5+1.0,-0) {};
    \path node[unit, pin={[pin edge={->}]right:$r_{n1} = \min\{\hat{s}_{n1},\hat{s}'_{n1}\}$}] (X-2) at (\x +12.5+1.0,-1.5) {};
    \path node[unit, pin={[pin edge={->}]right:$r_{1m} = \min\{\hat{s}_{1m},\hat{s}'_{1m}\}$}] (X-3) at (\x +12.5+1.0,-3.5) {};
    \path node[unit, pin={[pin edge={->}]right:$r_{nm} = \min\{\hat{s}_{nm},\hat{s}'_{nm}\}$}] (X-4) at (\x +12.5+1.0,-5) {};

	\draw[thick,dotted] ($(X-1.north west)+(-0.2,0.3)$)  rectangle ($(X-4.south east)+(0.2,-0.3)$);
	\draw[thick,dotted] ($(W-1.north west)+(-0.2,0.3)$)  rectangle ($(W-25.south east)+(0.2,-0.3)$);
	\draw[thick,dotted] ($(V-1.north west)+(-0.2,0.3)$)  rectangle ($(V-25.south east)+(0.2,-0.3)$);
	\draw[thick,dotted] ($(V-3.north west)+(-0.2,0.3)$)  rectangle ($(W-35.south east)+(0.2,-0.3)$);
	\draw[thick,dotted] ($(V-4.north west)+(-0.2,0.3)$)  rectangle ($(W-45.south east)+(0.2,-0.3)$);
	
	\draw[thick,dashed] ($(V-1.north west)+(-0.3,0.4)$)  rectangle ($(W-2.south east)+(0.3,-0.4)$);
	\draw[thick,dashed] ($(V-3.north west)+(-0.3,0.4)$)  rectangle ($(W-4.south east)+(0.3,-0.4)$);
	
	\node at  ($(W-16.west) + (0.3, -0.05-1.75)$) {$row-wise$};
	\node at  ($(W-16.west) + (0.3, -0.05-2.0)$) {$normalization$};
	
	\node at  ($(V-1.east) + (0.3, +1.25)$) {$column-wise$};
	\node at  ($(V-1.east) + (0.3, +1.0)$) {$normalization$};
	
    
	
	\coordinate (LL1) at (\x +10.70,-0.35);
    \coordinate (LL2) at (\x +12.5+0.5,-1.65);
    
    \coordinate (LL3) at (\x +10.70,-3.35);
    \coordinate (LL4) at (\x +12.5+0.5,-2.30);
    
    \node[hidden text] (LL-5) at (12.5, -1.0) {$\hat{s}$};
    \node[hidden text] (LL-5) at (12.5, -2.6) {$\hat{s}'$};

	\draw [thick,black] (LL1) to[out=0,in=180] (LL2);
	\draw [thick,black] (LL3) to[out=45,in=180] (LL4);

\end{tikzpicture}
%
\caption{Matching network $g$ for a set of $n$ workers and $m$ firms. Given inputs $p, q \in \mathbb{R}^{n \times m}$ the matching network is a feed-forward network with $R$ hidden layers that uses {\em softplus} activation to generate non-negative scores and normalization to compute the randomized matching. We additionally generate a Boolean mask matrix, $\beta$, and multiply it with the score matrix before normalization to ensure IR by making the probability of matches that are unacceptable zero.}
\label{fig:gen-net-2}
\end{figure*}

We use a neural network to represent a matching mechanism. Let $g^\theta: P\to \triangle({\mathcal B})$ denote the mechanism, for parameters $\theta\in \mathbb{R}^d$.
The input is a preference profile, and the output defines a distribution on matchings.
We use a feed-forward neural network with $R = 4$ fully connected hidden layers, $J = 256$ units in each layer, {\em leaky ReLU} activations, and a fully connected output layer.\footnote{Leaky ReLU is a variation of ReLU that addresses the issue of dead neurons by allowing a small positive gradient for negative input values, thereby preventing gradients from getting stuck at $0$}. See Figure~\ref{fig:gen-net-2}.

To represent an agent's preference order in the input, we adopt a 
utility for each agent on the other side of the
market that has a constant offset in utility across successive agents in the preference order. This is purely a representation choice and does not imply that we use this particular utility to study SP (on the contrary, we work with a FOSD-based quantification of the degree of approximation to ordinal SP).
In particular, let $p^\succ_w = (p_{w1}^\succ, \ldots, p_{wm}^\succ)$ and $q^\succ_f= (q_{1f}^\succ, \ldots, q_{nf}^\succ)$
represent the preference order of a worker and firm, respectively. 
We define $p^{\succ}_{w\bot} = 0$ and $q^\succ_{\bot f}=0$, and to further illustrate this representation: 
\begin{itemize}
\item For a preference order $\succ$ with $f_1: w_2, w_1, w_3$, we represent this at the input as $q^\succ_{f_1}=(\frac{2}{3},1,\frac{1}{3})$. 
\item For a preference order $\succ$ with $w_1: f_1, f_2, \bot, f_3$, we represent this as $p^\succ_{w_1}=(\frac{2}{3},\frac{1}{3},-\frac{1}{3})$. 
\item For a preference order $\succ$ with $w_1: f_1, f_2, \bot, f_3, f_4$, we represent this as $p^\succ_{w_1}=(\frac{2}{4},\frac{1}{4},-\frac{1}{4},-\frac{2}{4})$. 
\end{itemize}
Formally, we have:
\begin{align}
 p_{wj}^\succ &= \frac{1}{m}\left( \mathbf{1}_{ j \succ_{w} \bot} + \sum_{j'=1}^{m} \left(\mathbf{1}_{j \succ_{w} j'} - \mathbf{1}_{\bot \succ_{w} j'}\right) \right) \\
 q_{if}^\succ &= \frac{1}{n}\left( \mathbf{1}_{i \succ_{f} \bot} + \sum_{i'=1}^{n} \left(\mathbf{1}_{i \succ_{f} i'} - \mathbf{1}_{\bot \succ_{f} i}\right)\right)
\end{align}
where $\mathbf{1}_X$ is the indicator function for event $X$.

Taken together, the input is vector $(p_{11}^\succ, \ldots, p_{nm}^\succ, q_{11}^\succ, \ldots, q_{nm}^\succ)$ of 
$2\times n \times m$ numbers.

The output of the network is a vector $r \in [0, 1]^{n \times m}$, with $\sum_{j = 1}^{m} r_{wj} \leq 1$ and 
$\sum_{i = 1}^{n} r_{if} \leq 1$ for every every $w \in [n]$ and $f \in [m]$. 
This defines the marginal probabilities in a randomized matching for this input profile. 
The network first outputs two sets of scores $s\in \mathbb{R}^{(n + 1) \times m}$ and $s' \in \mathbb{R}^{n \times (m + 1)}$.
We apply the {\em softplus function} (denoted by $\sigma_+$) element-wise to these scores, where $\sigma_+(x) = \ln (1 + e^{x})$. 
To ensure IR, we first construct a Boolean mask variable $\beta_{wf}$, which is zero only when the match is unacceptable to one or both the worker and firm, i.e., when $\bot \succ_{w} f$ or $\bot \succ_{f} w$. We set $\beta_{n+1, f} = 1$ for $f \in F$ and $\beta_{w, m+1} = 1$ for $w \in W$. We multiply the scores $s$ and $s'$ element-wise with the corresponding Boolean mask variable to compute $\bar s \in \mathbb{R}_{\geq 0}^{(n + 1) \times m}$ and $\bar s' \in \mathbb{R}_{\geq 0}^{n \times (m + 1)}$.

For each $w \in \overline{W}$, we have $\bar s_{wf} = \beta_{wf} \ln (1 + e^{s_{wf}})$, for all $f\in F$. For each $f \in \overline{F}$, we have $\bar s'_{wf} = \beta_{wf} \ln (1 + e^{s'_{wf}})$, for all $w \in W$.
We normalize $\bar s$ along the rows and $\bar s'$ along the columns to obtain {\em normalized scores}, $\hat s$ and $\hat s'$ respectively. The match probability $r_{wf}$, for worker $w \in W$ and firm $f \in F$, is computed as the minimum of the normalized scores:
$r_{wf} = \min\left(\frac{\bar s_{wf}}{\sum_{f' \in \overline{F}} \bar s_{wf'}}, 
\frac{\bar s'_{wf}}{\sum_{w' \in \overline{W}} \bar s'_{w'f}} \right)$.
We have $r_{wf} = 0$ whenever $\beta_{wf} = 0$, ensuring that every matching 
in the support of the distribution will be IR. 
Based on our construction, the allocation matrix $r$ is weakly doubly stochastic, with rows and columns summing to at most 1.
\citet{BCKM13} show that any weakly doubly stochastic matrix can be decomposed to a convex combination of 0-1, weakly doubly stochastic matrices.

\subsection{Formulation as a Learning Problem}
We formulate a {\em loss function} $\mathcal{L}$ that is defined on training data of $\ell$ preference profiles, $D=\{\succ^{(1)},\ldots,\succ^{(\ell)}\}$. Each preference profile $\succ$ sampled i.i.d.~from a distribution on profiles. 
We allow for {\em correlated preferences}; i.e., workers may tend to agree that one of the firms is preferable to one of the other firms, and similarly for firms. 
The loss function captures a tradeoff between stability and ordinal SP.
Recall that $g^\theta(\succ) \in [0, 1]^{n \times m}$ denotes the randomized matching. We write $g^\theta_{w\bot}(\succ) = 1 - \sum_{f = 1}^{m} g^\theta_{wf}(\succ)$ and $g^\theta_{\bot f}{(\succ)} = 1 - \sum_{w=1}^{n} g^\theta_{wf}(\succ)$ to denote the probability of worker $w$ and firm $f$ being unmatched, respectively.

\noindent \textbf{Stability Violation.~~}
For worker $w$ and firm $f$, we define the {\em stability violation }
at profile $\succ$ as
\begin{align}
\mathit{stv}_{wf}(&g^\theta, \succ) \!=\! 
\left(\sum_{w' \in \overline{W}} g^\theta_{w'f}(\succ)\cdot\max\{q_{wf}^\succ - q_{w'f}^\succ, 0\} \right)\notag \\
 & \times \! \left(\sum_{f' \in \overline{F}} g^\theta_{wf'}(\succ)\cdot \max\{p_{wf}^\succ - p_{wf'}^\succ,0\}\!\!\right). 
\end{align}

This captures the first kind of {\em ex ante} justified envy in Definition~\ref{def:envy}.
We can omit the second kind of {\em ex ante} justified envy because the learned mechanisms satisfy IR through the use of the Boolean mask matrix (and thus, there are no violations of the second kind).

The average stability violation (or just {\em stability violation}) of mechanism $g^\theta$ on profile $\succ$
 is $\mathit{stv}(g^\theta, \succ) = \frac{1}{2}\left(\frac{1}{m} +\frac{1}{n}\right)\sum_{w = 1}^{n}\sum_{f = 1}^m \mathit{stv}_{wf}{(g^\theta, \succ)}.$
We define the {\em expected stability violation}, $\mathit{STV}(g^\theta)=\mathbb{E}_{\succ} \mathit{stv}(g^\theta,\succ)$. 
We also write $\mathit{stv}(g^\theta)$ to denote the average stability violation on the training data.
\begin{theorem}\label{thm:stability-violation}
A randomized matching mechanism $g^\theta$ is ex ante stable up to zero-measure events if and only if $\mathit{STV}(g^\theta)=0$. 
\end{theorem}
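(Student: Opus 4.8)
The plan is to reduce the statement to a pointwise equivalence and then lift it via nonnegativity. First observe that each summand $\mathit{stv}_{wf}(g^\theta,\succ)$ is a product of two sums of nonnegative reals, hence $\mathit{stv}_{wf}(g^\theta,\succ)\ge 0$, so (the prefactor $\tfrac12(\tfrac1m+\tfrac1n)$ being positive) $\mathit{stv}(g^\theta,\succ)\ge 0$ for every $\succ$, and therefore $\mathit{STV}(g^\theta)\ge 0$. The key pointwise claim I would prove is: for a fixed profile $\succ$, $\mathit{stv}(g^\theta,\succ)=0$ if and only if the randomized matching $g^\theta(\succ)$ is ex ante stable.

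To prove the pointwise claim, fix $\succ$ and set $r:=g^\theta(\succ)$. Since every $\mathit{stv}_{wf}(g^\theta,\succ)\ge0$ and the prefactor is positive, $\mathit{stv}(g^\theta,\succ)=0$ iff $\mathit{stv}_{wf}(g^\theta,\succ)=0$ for every pair $(w,f)$. Write $\mathit{stv}_{wf}(g^\theta,\succ)=A_{wf}\cdot B_{wf}$, where $A_{wf}$ and $B_{wf}$ are the first (firm-side) and second (worker-side) factors in its definition; then $\mathit{stv}_{wf}(g^\theta,\succ)=0$ iff $A_{wf}=0$ or $B_{wf}=0$. The crux is to show that $A_{wf}>0$ holds exactly when firm $f$ prefers $w$ to some worker $w'\in\overline{W}$ with $r_{w'f}>0$, and $B_{wf}>0$ exactly when worker $w$ prefers $f$ to some firm $f'\in\overline{F}$ with $r_{wf'}>0$. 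For this I would use that the evenly-spaced representations are order-preserving—$p^\succ_{wf}>p^\succ_{wf'}\iff f\succ_w f'$ and $p^\succ_{wf}>0\iff f\succ_w\bot$, and symmetrically for $q$—together with the conventions $p^\succ_{w\bot}=q^\succ_{\bot f}=0$, which let the unmatched term $r_{w\bot}\max\{p^\succ_{wf},0\}$ be absorbed as the $f'=\bot$ summand (and likewise on the firm side), and that ``fractionally matched'' means positive marginal probability. Consequently $A_{wf}>0$ and $B_{wf}>0$ hold simultaneously iff the pair $(w,f)$ causes ex ante justified envy of type~(1) of Definition~\ref{def:envy}. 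Since $g^\theta$ is IR by construction (the Boolean mask forces $r_{wf}=0$ on any unacceptable pair), type~(2) of Definition~\ref{def:envy} never arises, so ``$r$ exhibits no type-(1) envy'' is equivalent to ``$r$ is ex ante stable''. Chaining the equivalences gives the pointwise claim.

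It then remains to lift: because $\mathit{stv}(g^\theta,\cdot)$ is a nonnegative function on the (finitely supported) profile distribution, $\mathit{STV}(g^\theta)=\mathbb{E}_\succ\mathit{stv}(g^\theta,\succ)=0$ iff $\mathit{stv}(g^\theta,\succ)=0$ for all $\succ$ outside a zero-measure set; combined with the pointwise claim, this is iff $g^\theta(\succ)$ is ex ante stable for all $\succ$ up to a zero-measure set of profiles, i.e., iff $g^\theta$ is ex ante stable up to zero-measure events.

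The main obstacle is the middle step: verifying that $A_{wf}$ and $B_{wf}$ detect an envy-inducing fractional partner if and only if one exists, with the $\bot$ option handled correctly (the sign of the evenly-spaced utility at $\bot$, and the folding of the $\max\{\cdot,0\}$ term into the sum over $\overline{W}$ resp.\ $\overline{F}$), plus confirming that IR via the mask makes condition~(2) of Definition~\ref{def:envy} vacuous. The nonnegativity-based lifting is routine.
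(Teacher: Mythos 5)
Your proposal is correct and follows essentially the same route as the paper: nonnegativity reduces $\mathit{STV}(g^\theta)=0$ to $\mathit{stv}(g^\theta,\succ)=0$ almost everywhere, which in turn is equivalent to every $\mathit{stv}_{wf}$ vanishing, and this is matched term-by-term to the absence of type-(1) ex ante justified envy, with type (2) dispatched by the IR mask. If anything, your treatment of the factorization $\mathit{stv}_{wf}=A_{wf}\cdot B_{wf}$ is more careful than the paper's, since you correctly note that the product vanishes iff \emph{at least one} factor does, which is exactly the negation of the conjunctive envy condition, whereas the paper's write-up reads as if both factors must vanish.
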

\begin{proof}
We defer the proof to appendix~\ref{app:stv-proof}
\end{proof}

\noindent \textbf{Ordinal SP violation.~~}
We turn now to quantifying the degree of approximation to ordinal SP. Let $\succ_{-i}=(\succ_1,\ldots,\succ_{i-1},\succ_{i+1},\ldots,\succ_{n + m})$. For a valuation profile, $\succ\ \in P$, and a mechanism $g^\theta$, let $\Delta_{wf}(g^\theta, \succ'_{w}, \succ) = g^\theta_{wf}(\succ'_w, \succ_{-w}) - g^\theta_{wf}(\succ_w, \succ_{-w})$. The {\em regret} to worker $w$ (firm $f$) is defined as:
\begin{equation}
\begin{split}
\mathrm{regret}_w(g^\theta, \succ) &= \max_{\succ'_{w} \in P} \left( \max_{f' \succ_{w} \bot} \sum_{f\succ_w f'} \Delta_{wf}(g^\theta, \succ'_{w}, \succ) \right). \\
\mathrm{regret}_f(g^\theta, \succ) &= \max_{\succ'_{f} \in P} \left( \max_{w' \succ_{f} \bot} \sum_{w\succ_f w'} \Delta_{wf}(g^\theta, \succ'_f, \succ) \right). 
\end{split} 
\end{equation}
The {\em average regret} on profile $\succ$ is:
\begin{align}
 \mathit{regret}(g^\theta,\succ)= \frac{1}{2} \left(\frac{1}{m}\sum_{w\in W}\mathrm{regret}_w(g^\theta, \succ)+ \frac{1}{n}\sum_{f\in F} \mathrm{regret}_f(g^\theta, \succ) \right) 
\end{align} The {\em expected regret} is $\mathit{RGT}(g^\theta)=\mathbb{E}_\succ \mathit{regret}(g^\theta,\succ)$. We can also write $\mathit{rgt}(g^\theta)$ to denote the average regret 
on training data.

\begin{theorem}\label{thm:regret}
The regret to a worker (firm) for a given preference profile
is the maximum amount by which the worker (firm) can increase their expected normalized utility through a misreport, fixing the reports of others. 
\end{theorem}
\begin{proof}
Consider some worker $w \in W$. Without loss of generality, let $\succ_w: f_1, \ldots, f_k, \bot, f_{k+1},\ldots, f_m$. Any normalized {\em $\succ_w$-utility~} function, $u_w$, consistent with ordering given by $\succ_w$ satisfies $1 \geq u_w(f_1) \geq u_w(f_2) \geq\ldots u_w(f_k) \geq 0 \geq u_w(f_{k+1}) \geq \ldots u_w(f_m)$. Let $U_w$ be the set of all such consistent utility functions.

Consider some misreport $\succ'_w$. We have $\Delta_{wf}(g^\theta, \succ'_{w}, \succ) = g^\theta_{wf}(\succ'_w, \succ_{-w}) - g^\theta_{wf}(\succ_w, \succ_{-w})$. The increase in utility for worker $w$ when the utility function is $u_w$ is given by $\sum_{f \in F} u_w(f) \Delta_{wf}(g^\theta, \succ'_{w}, \succ) $. The maximum amount by which worker $w$ can increase their expected normalized utility through misreport $\succ_w'$ is given by the objective: $\max_{u_w \in U_w} \sum_{f \in F} u_w(f) \Delta_{wf}(g^\theta, \succ'_{w}, \succ)$. 

Since $g^\theta$ always guarantees IR, we have:
\begin{equation}
    \sum_{f \in F: \bot \succ f} u_w(f) \Delta_{wf}(g^\theta, \succ'_{w}, \succ) = \sum_{f \in F: \bot \succ_w f} u_w(f) g^\theta_{wf}(\succ'_w, \succ_{-w}) \leq 0
\end{equation}. 
Thus, we can simplify our search space by only considering $u_w \in U_w$ where $u_w(f_{k+1}), \ldots, u_w(f_m) = 0$. 

Define $\delta_k = u_w(f_k), \delta_{k-1} = u_w(f_{k-1}) - u_w(f_k), \delta_1 = u_w(f_1) - u_w(f_2)$. This objective can thus be rewritten as:
\begin{align}
 \text{max} &\sum_{f = 1}^{k} \left(\sum_{i = f}^{k} \delta_i \right) \Delta_{wf}(g^\theta, \succ'_{w}, \succ)\\
 \text{such that} &\sum_{i = 1}^{k} \delta_i \leq 1 \text{~and~}
 \delta_1, \ldots \delta_k \geq 0
\end{align}

Changing the order of summation, we have the following optimization problem:
\begin{align}
 \text{max} \sum_{i = 1}^{k} \delta_i & \left(\sum_{f = 1}^{i} \Delta_{wf}(g^\theta, \succ'_{w}, \succ)\right) \\
 \text{such that} \sum_{i = 1}^{k} \delta_i &\leq 1 \text{~and~}\delta_1, \ldots \delta_k \geq 0
\end{align}

This objective is of the form $ \max_{\|x\|_1 \leq 1} x^T y$ and it's solution is given by the $||y||_{\infty}$. Thus, the solution to the above maximization problem is given by $\max_{i \in [k]} \sum_{f=1}^{i} \Delta_{wf}(g^\theta, \succ'_{w}, \succ)$. But this is the same as $\max_{f': f' \succ_w \bot} \sum_{f: f \succ_w f'} \Delta_{wf}(g^\theta, \succ'_{w}, \succ)$. Computing the maximum possible increase over all such misreports gives us $\max_{\succ_{w'} \in P} \left(\max_{f': f' \succ_w \bot} \sum_{f: f \succ_w f'} \Delta_{wf}(g^\theta, \succ'_{w}, \succ)\right)$. This quantity is exactly $\mathit{regret}_w(g^\theta, \succ)$. The proof follows similarly for any firm $f$.
\end{proof}

\begin{theorem}\label{thm:regret0}
A randomized mechanism, $g^\theta$,
 is ordinal SP up to zero-measure events if and only if $\mathit{RGT}(g^\theta)=0$.
\end{theorem}
The proof is similar to Theorem~\ref{thm:stability-violation}, and deferred to Appendix~\ref{app:sp-proof}.

\subsection{Training Procedure}
For a mechanism parameterized as $g^\theta$, the training problem that we formulate is,
 \begin{align}
 &\min_{\theta}\, \lambda \cdot \mathit{stv}(g^\theta) + (1 - \lambda) \cdot \mathit{rgt}(g^\theta),
 \label{eq:match-opt}
 \end{align}
 where $\lambda \in [0, 1]$ 
 controls the tradeoff between approximate stability and approximate SP.
We use SGD to solve~\eqref{eq:match-opt}. The gradient of the degree of violation of stability with respect to network parameters is straightforward to calculate. The gradient of regret is complicated by the nested maximization in the definition of regret. In order to compute the gradient, we first solve the inner maximization by checking
 possible misreports. Let $\hat{\succ}^{(\ell)}_i$ denote the {\em defeating preference report} for agent $i$ (a worker or firm) at preference profile $\succ^{(\ell)}$ that maximizes $regret_i(g^\theta, \succ^{(\ell)})$. Given this, we obtain the derivative of regret for agent $i$ with respect to the network parameters, fixing the misreport to the defeating valuation and adopting truthful reports for the others.

For the case of TTC and RSD, we also need to quantify the IR violation (this is
 necessarily zero for the other mechanisms). We define the IR violation 
at profile $\succ$ as:
\begin{equation}
 \begin{aligned}
 \mathit{irv}(g, \succ) &= \frac{1}{2m}\sum_{w = 1}^{n}\sum_{f = 1}^{m} g_{wf}(\succ)\cdot \left(\max\{-q_{wf}^{\succ}, 0\}\right) \\&+ \frac{1}{2n}\sum_{w = 1}^{n}\sum_{f = 1}^{m} g_{wf}(\succ)\cdot \left(\max\{-p_{wf}^{\succ}, 0\}\right)
 \end{aligned}
\end{equation}

For the case of RSD, we also include the average IR violation on test data when reporting the stability violation.

\section{Experimental Results}\label{sec:experiments}
We report the results on a test set of 204,800 preference profiles, and use the Adam optimizer to train our models for 50,000 mini-batch iterations, with mini-batches of 1024 profiles.
We use the {\em PyTorch} deep learning library, and all experiments are run on a cluster of NVIDIA GPU cores. 
%

We study the following market settings: 
\begin{itemize}
\item For {\em uncorrelated preferences}, for each worker or firm, we sample uniformly at random from all preference orders, and then, with probability 0.2 (truncation probability), we choose at random a position at which to truncate this agent's preference order.
\item For {\em correlated preferences}, we sample a preference profile as in the uncorrelated case. We also sample a common preference order on firms and a common preference order on workers. For each agent, with probability, $p_{\mathrm{corr}}>0$, we replace its preference order with the common preference order for its side of the market.
\end{itemize}

Specifically, we consider matching problems with $n = 4$ workers and $m = 4$ firms with uncorrelated preference and varying probability of correlation $p_{\mathrm{corr}} = \{0.25, 0.5, 0.75\}$. 

\begin{figure*}[t]
\centering
\includegraphics[scale=0.42]{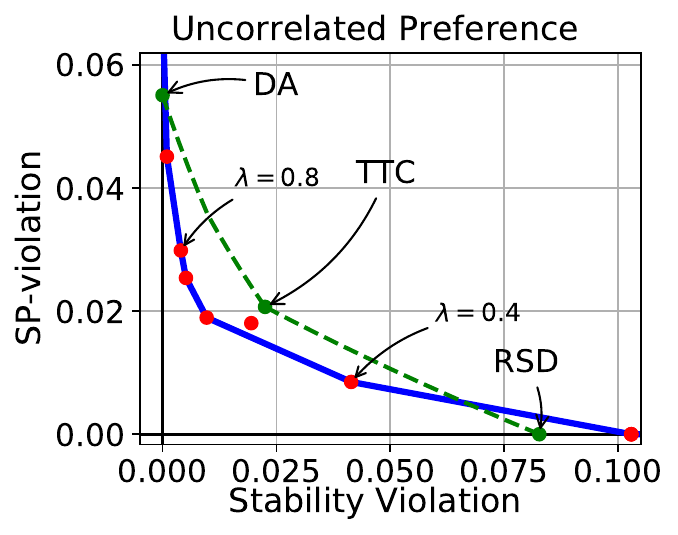}
\includegraphics[scale=0.42]{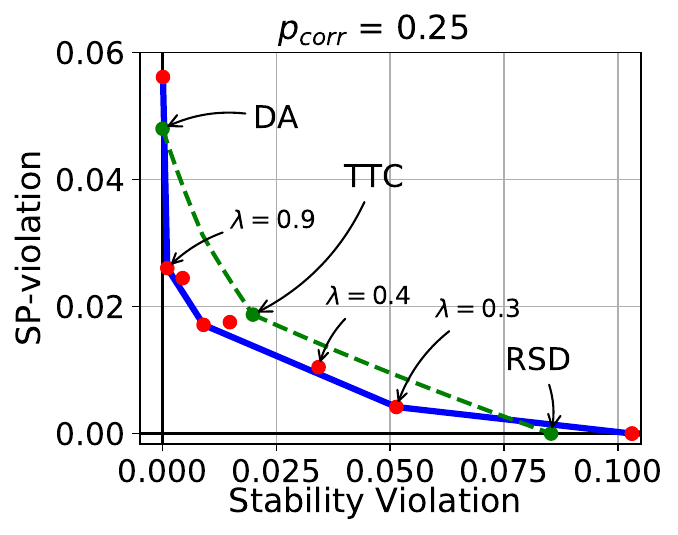}\\
\includegraphics[scale=0.42]{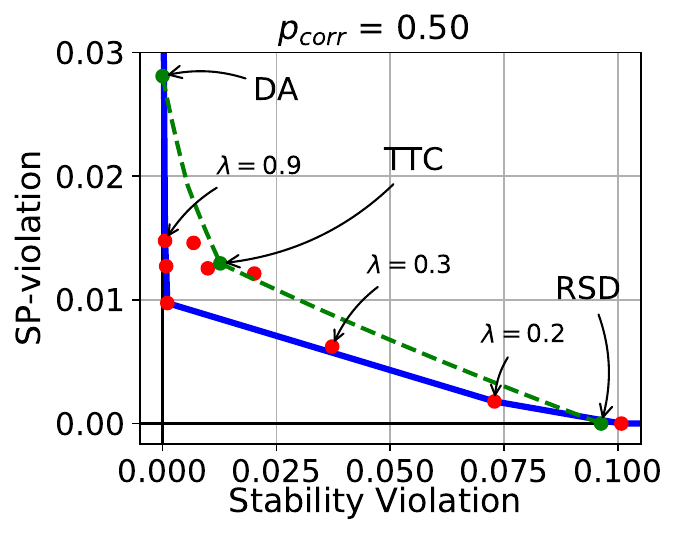}
\includegraphics[scale=0.42]{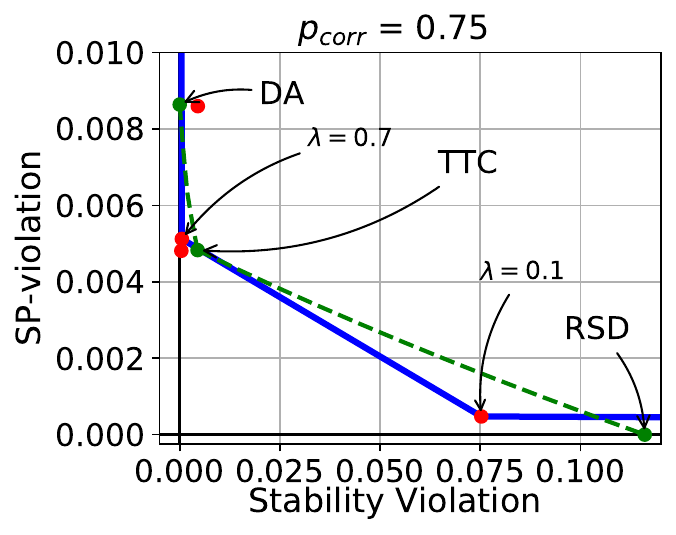}

\caption{\label{fig:frontier} Comparing stability violation and strategy-proofness violation from the learned mechanisms for different choices of $\lambda$ (red dots) with the best of worker- and firm-proposing DA, as well as
 TTC, and RSD, in $4\times4$ two-sided matching, and considering uncorrelated preference orders as well as markets
 with increasing correlation ($p_{\mathrm{corr}}\in \{0.25, 0.5, 0.75\}$). The stability violation for TTC and RSD includes IR violations.}

\end{figure*}

We compare the performance of our mechanisms, varying parameter $\lambda$ between $0$ and $1$, 
 with the best of worker- and firm- proposing DA and TTC (as determined 
by average SP violation over the test data) and RSD\footnote{We only plot the performance of one-sided RSD as it achieves lower stability violation the two-sided version}. We also compare against convex combinations of DA, TTC, and RSD.
We plot the resulting frontier on stability violation ($\mathit{stv}(g^\theta)$) and SP violation ($\mathit{rgt}(g^\theta)$) in Figure~\ref{fig:frontier}. As explained above, because TTC and RSD mechanisms do not guarantee IR, we include the IR violations in the reported stability violation (none of the other mechanisms fail IR).
 
At $\lambda = 0.0$, we learn a mechanism that has very low regret ($\approx0$) but poor stability. This performance is similar to that of RSD. For large values of $\lambda$, we learn a mechanism that approximates DA. For intermediate values, we find solutions that dominate the convex combination of DA, TTC, and RSD and find novel and interesting tradeoffs between SP and stability. Notably, for lower levels of correlations we see substantially better SP than DA along with very little loss in stability. Given the importance of stability in practice, this is a very intriguing discovery. For higher levels of correlations, we see substantially better stability than RSD along with very little loss in SP. It is also interesting to see that TTC itself has intermediate properties, between those of DA and RSD. Comparing the scale of the y-axes, we can also see that increasing correlation tends to reduce the opportunity for strategic behavior across both the DA and the learned mechanisms.

\begin{figure*}[t]
\centering
\includegraphics[scale=0.33]{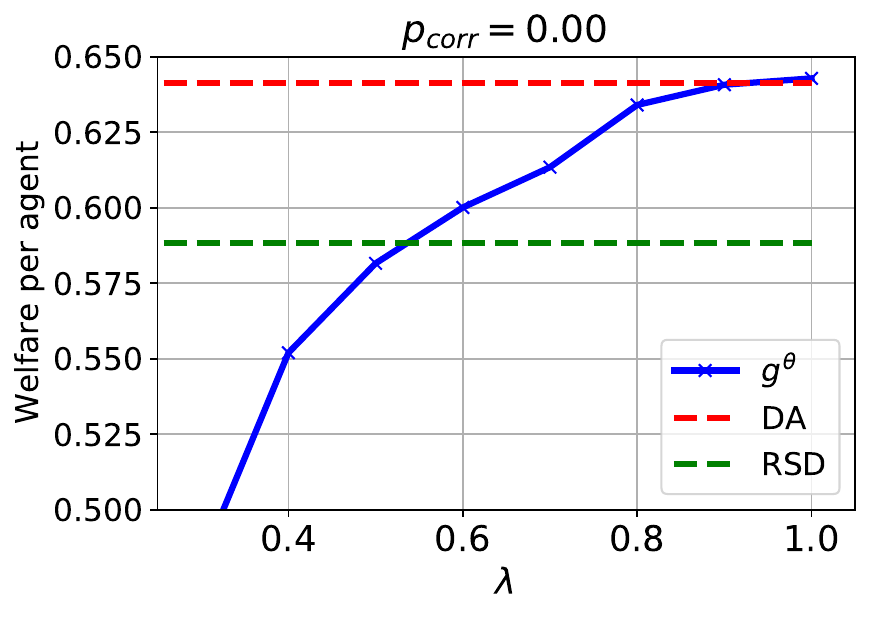}
\includegraphics[scale=0.33]{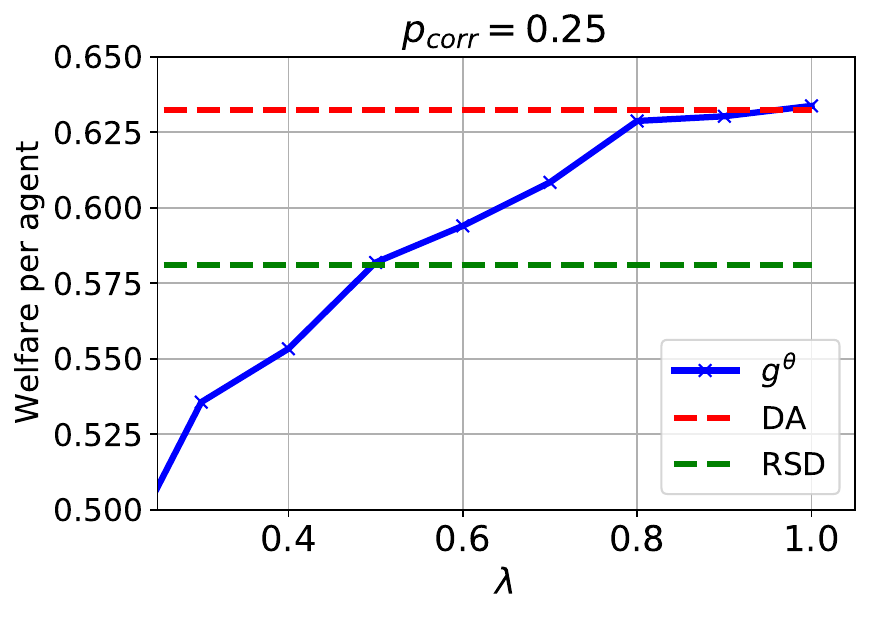}
\includegraphics[scale=0.33]{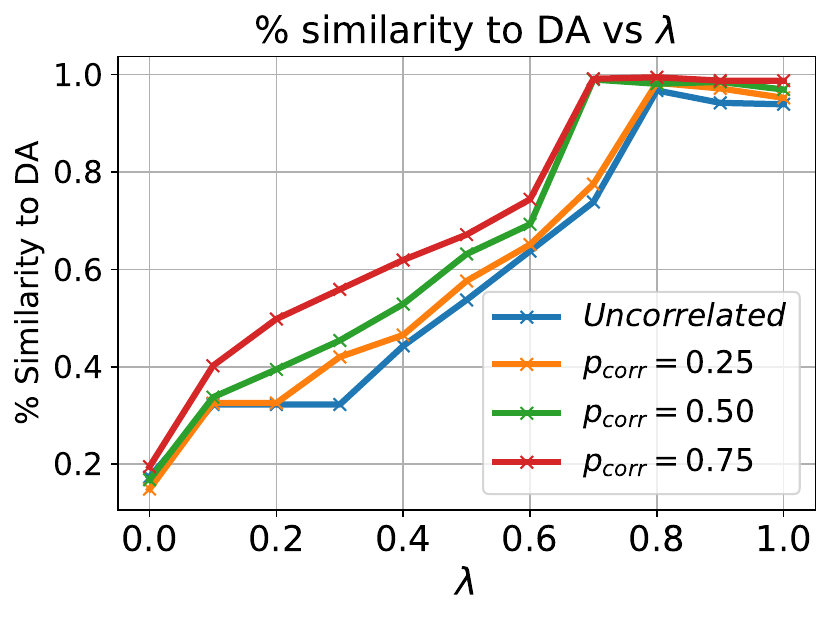}
\includegraphics[scale=0.33]{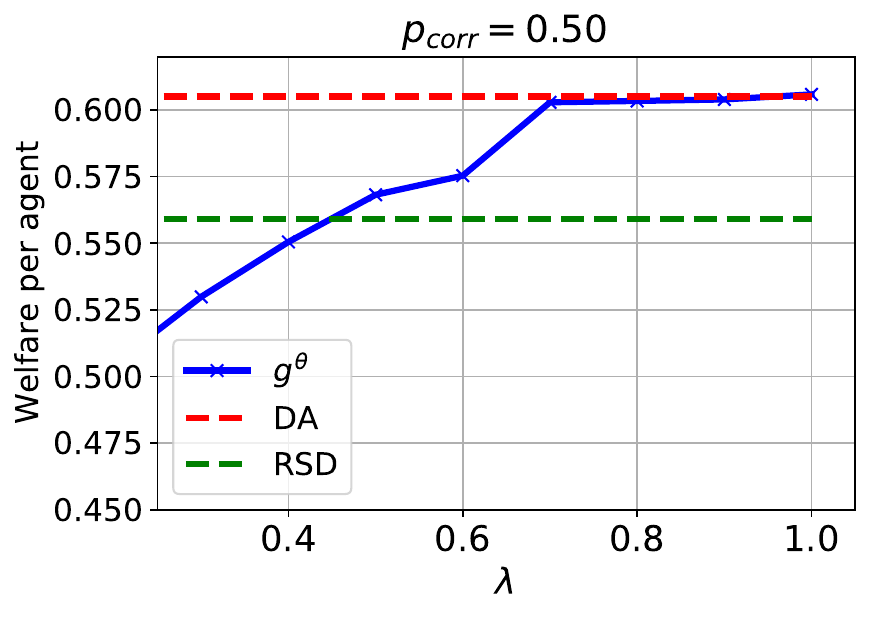}
\includegraphics[scale=0.33]{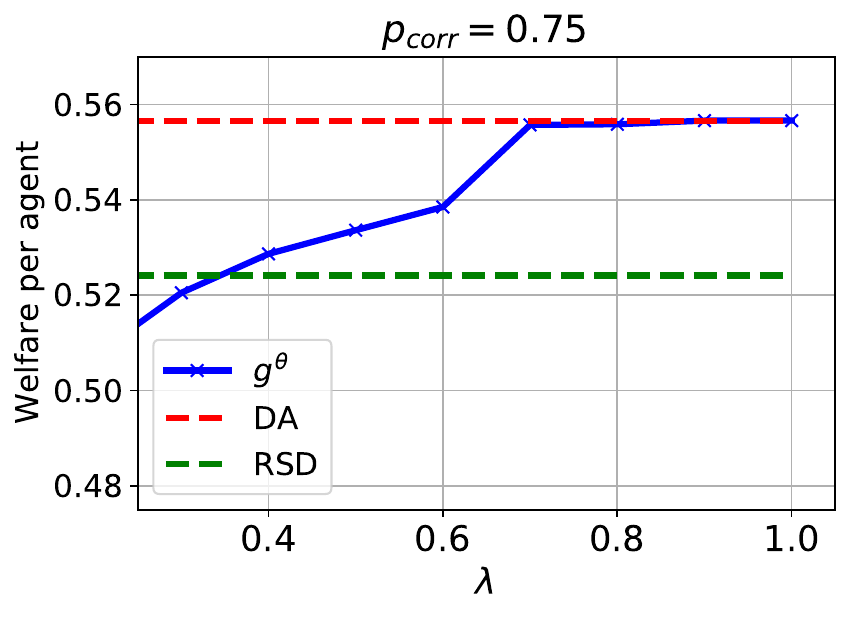}
\includegraphics[scale=0.33]{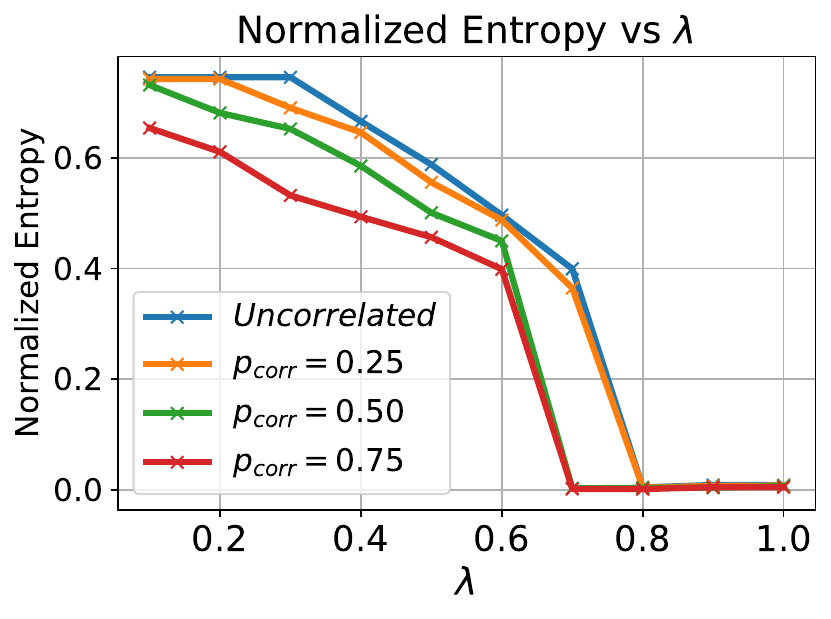}
\caption{\label{fig:props}
\textbf{Left and Middle}: Comparing welfare per agent of the learned mechanisms for different values of the tradeoff parameter $\lambda$ with the best of the firm- and worker- proposing DA, as well as
 TTC, and RSD. \textbf{Right, Top}: Comparing the average instance-wise max similarity scores ($\mathit{sim}(g^\theta)$) of the learned mechanisms with worker- and firm-proposing DA. \textbf{Right, Bottom}: Normalized entropy of the learned mechanisms for different values of the tradeoff parameter $\lambda$. The results are shown for uncorrelated preferences
 as well as an increasing correlation between preferences
 ($p_{\mathrm{corr}}\in \{0.25, 0.5, 0.75\}$)} 
\end{figure*}

Figure~\ref{fig:props} shows the expected welfare for the learned mechanisms, measured here for the equi-spaced utility function (the function used in the input representation). We define the welfare of a mechanism $g$ (for the equi-spaced utility function) on a profile $\succ$ as:
\begin{align}
 \mathit{welfare}(g,\succ)=\frac{1}{2}\left(\frac{1}{n} + \frac{1}{m}\right)\sum_{w\in W}\sum_{f\in F} g_{wf}(\succ) \left( p^{\succ}_{wf} + q^{\succ}_{wf} \right).
\end{align}
We compare against the maximum of the expected welfare achieved by the worker- and firm- proposing DA and TTC mechanisms, as well as that from RSD.
As we increase $\lambda$, and the learned mechanisms come closer to DA, the welfare of the learned mechanisms improves.
It is notable that for choices of $\lambda$ in the range 0.8 and higher, i.e., the choices of $\lambda$ that provide interesting opportunities for improving SP relative to DA, we also see good welfare. We also see that TTC, and especially RSD have comparably lower welfare.
 It bears emphasis that when a mechanism is not fully SP, as is the case for all mechanisms except RSD, 
this is an idealized view of welfare since it assumes truthful reports. In fact, we should expect welfare to be reduced through strategic misreports and the substantially improved SP properties of the learned mechanisms relative to DA (Figure~\ref{fig:frontier}) would be expected to further work in favor of improving the welfare in the learned mechanisms relative to DA.\footnote{In fact, the same is true for the stability of a non-SP mechanism such as DA, but it has become standard to assume truthful reports to DA in considering the stability properties of DA.}
Lastly, we observe that for small values of $\lambda$ the learned mechanisms have relatively low welfare compared to RSD. This is interesting and suggests that achieving IR together with SP (recall that RSD is not IR!) is very challenging in two-sided markets.
In interpreting the rules of the learned mechanisms, and considering the importance of DA, we can also compare their functional similarity with DA. For this, let $w$-DA and $f$-DA denote the worker- and firm-proposing DA, respectively. For a given preference profile, we compute the similarity of the learned
rule with DA as
\begin{equation}
 \mathit{sim}(g^\theta, \succ) = \max_{\mathcal{M} \in \{w\text{-DA}, f\text{-DA}\}}{\frac{\sum_{(w, f): g^\mathcal{M}_{wf}(\succ) = 1} g^\theta_{wf}(\succ)}{\sum_{(w, f): g^\mathcal{M}_{wf}(\succ) = 1} 1}}. 
\end{equation}

This calculates the agreement between the two mechanisms, normalized by the size of the DA matching,
 and taking the best of $w$-DA or $f$-DA. Let $\mathit{sim}(g^\theta)$
 denote the average similarity score on test data.
As we increase $\lambda$, i.e., penalize stability violations more, we see in 
Figure~\ref{fig:props} that the learned matchings get increasingly close to 
the DA matchings, as we might expect.
We also quantify the degree of randomness of the learned mechanisms,
 by computing the {\em normalized entropy per agent}, taking the expectation over all preference profiles. For a given profile $\succ$, we compute normalized entropy
per agent as (this is 0 for a deterministic mechanism): 
\begin{equation}
\begin{aligned}
 H(\succ) &= -\frac{1}{2n} \sum_{w \in W} \sum_{f \in \overline{F}} \frac{g_{wf}(\succ) \log_2 g_{wf}(\succ)}{\log_2 m} \\&- \frac{1}{2m} \sum_{f \in F} \sum_{w \in \overline{W}} \frac{g_{wf}(\succ) \log_2 g_{wf}(\succ)}{\log_2 n}.
\end{aligned}
\end{equation} 

Figure~\ref{fig:props} shows how the 
 entropy changes with $\lambda$.
 As we increase $\lambda$ and the mechanisms come closer to DA, the 
 allocations of the 
learned mechanisms
also becomes less stochastic.

\section{Discussion}
\label{sec:limit}

The methodology and results in this paper give a first 
but crucial step towards using machine learning to
understanding the structure of mechanisms that achieve nearly the same stability as DA while surpassing DA in terms of strategy-proofness. This is 
an interesting observation, given the practical 
and theoretical importance of the DA mechanism. 
Our experimental results
also suggest that achieving IR together with SP (recall that RSD is not IR) is challenging in two-sided markets,
 and this is reflected in the lower welfare
achieved by our learning mechanisms 
in the part of the frontier that emphasizes SP. The experimental results also show that for larger 
weights assigned to the importance of stability,
the learned mechanisms recover the 
DA mechanism, whereas when more emphasis 
is given to the importance of strategy-proofness,
the learned mechanisms are more randomized. There are other interesting questions waiting to be addressed. For instance, can we use this kind of framework to understand other tradeoffs, such as 
tradeoffs between strategy-proofness and efficiency?

As discussed in the introduction, a challenge in scaling to larger problems is the
need to find defeating misreports,
as exhaustively enumerating all misreports 
for an agent
becomes intractable as the number of agents on the other side of the market
increases (and thus the preference domain increases). 
A simple remedy is to restrict the language available to agents in making preference reports; e.g., 
it is commonplace to only allow for ``top-$k$ preferences" to be reported. Another remedy is to work in domains where there exists some structure on the preference domain, so that not all possible preference orders exist; e.g., {\em single-peaked preferences} are an especially stark example~\citep{black_rationale_1948}.
It will also be interesting to study complementary approaches that relax the discrete set of preference orderings to a continuous convex hull such as the {\em Birkhoff polytope}. Confining these preference orderings to the Birkhoff polytope can be accomplished using differentiable operations, such as the sinkhorn operator~\citep{adams2011ranking}. Such continuous relaxation allows misreports to be identified through gradient ascent rather than sampling (and would be similar to working with continuous valuations when learning revenue-optimal auctions~\citep{deep-auction19}). 
Despite this limitation, our current approach scales much further than other, existing methods for automated design, which are not well suited for this problem. For instance, methods that use linear programs or integer programs 
do not scale well because of the number of variables\footnote{$(m + 1)!^{n} \cdot (n + 1)!^{m} \cdot nm $ output variables} required to make explicit the input and output structure of the functional that must be optimized over.

A second
challenge is that 
we have not been able to find a suitable, publicly available dataset to test our approach. As a fallback, we have endeavored to capture some real-world structures by varying the correlation between
agent preferences
and the truncation probabilities of preferences. Using such stylized, probabilistic models and simulations for validating approaches is a well-established and prevalent practice, consistently utilized 
when investigating two-sided matching markets~\citep{CHEN2006202, RePEc:pri:econom:2013-3,das2005two,pmlr-v108-liu20c,dai2021learning}. For instance, Chen and S\"{o}nmez~\citep{CHEN2006202} design an environment for school choice where they consider six different schools with six seats each
and where the students' preferences 
are simulated to depend on proximity, quality, and a random factor. Echenique an Yariv~\citep{RePEc:pri:econom:2013-3} use a simulation study with eight participants on each side of the market, with the payoff matrix designed such that there are one, two, or three stable matches. Further, recent papers on bandit models for stable matching
model agent preferences through synthetic datasets~\citep{das2005two,pmlr-v108-liu20c,dai2021learning}.

In closing, we see exciting work ahead in advancing the design of matching mechanisms that strike the right balance between stability, strategyproofness, and other considerations that are critical to real-world applications.
As an example, it will be interesting to extend the learning framework to encompass desiderata such as capacity limitations or fairness considerations.
 Also relevant is to explore additional kinds of neural network architectures, such as
{\em attention mechanisms}~\citep{vaswani2017attention} or architectures that incorporate {\em permutation equivariance}~\citep{rahme2020permutationequivariant, pmlr-v162-duan22a}, with the advantage that they can
 significantly reduce the search space.

\section*{Acknowledgements}

The source code for all experiments along with the instructions to run it is available from Github at \url{https://github.com/saisrivatsan/deep-matching/}. This work is supported in part through an AWS Machine Learning Research Award. 

\bibliographystyle{abbrvnat}
\bibliography{references}

\appendix

\onecolumn

\section{RSD is ordinal SP but not Stable}\label{app:rsd-unstable}

We first show RSD satisfies FOSD and is thus ordinal SP. Consider agent $i$ in some position $k$ in the order. The agent's report has no effect on the choices of preceding agents, whether workers or firms (including whether agent $i$ is selected by an agent on the other side). Reporting its true preference ensures, in the event that it remains unmatched by position $k$, that it is matched with its most preferred agent of those remaining. 
For the one-sided version, the same argument holds for agents that are in the priority order. If an agent isn't on the side that's on the priority order, then that agent's report has no effect at all.

In the following example, we show RSD mechanism is not stable.

\begin{example}
\label{ex:1}
Consider $n = 3$ workers and $m = 3$ firms with the following preference orders:
\begin{align*}
 w_1: f_2, f_3, f_1, \bot \quad f_1: w_1, w_2, w_3, \bot\\
 w_2: f_2, f_1, f_3, \bot \quad f_2: w_2, w_3, w_1, \bot\\
 w_3: f_1, f_3, f_2, \bot \quad f_3: w_3, w_1, w_2, \bot
\end{align*}
The matching found by worker-proposing DA is $(w_1, f_3), (w_2, f_2), (w_3, f_1)$. This is a stable matching. If $f_1$ truncates and misreports its preference as $f_1: w_1, w_2, \bot, w_3$, the matching found is $(w_1, f_1), (w_2, f_2), (w_3, f_3)$. Firm $f_1$ is matched with a more preferred worker, and hence the mechanism is not strategy-proof.
Now consider the matching under RSD. The marginal matching probabilities $r$ is given by: 
\begin{equation*}
r = 
\left(\begin{matrix}
\frac{11}{24} & \frac{1}{4} & \frac{7}{24}\\[4pt]
\frac{1}{6} & \frac{3}{4} & \frac{1}{12}\\[4pt]
\frac{3}{8} & 0 & \frac{5}{8} 
\end{matrix}\right)
\end{equation*}

$f_2$ and $w_2$ are the most preferred options for $w_2$ and $f_2$ respectively and they would prefer to be matched with each other always rather than being fractionally matched with each other. Here $(w_2, f_2)$ is a blocking pair and thus RSD is not stable.
\end{example}

\section{TTC is neither Stable nor Strategyproof}\label{app:ttc-unstable}

\begin{example}
\label{ex:2}
Consider $n = 4$ workers and $m = 4$ firms with the following preference orders:

\begin{align*}
 w_1: f_1, \bot &\quad f_1: w_2, w_3, w_4, \bot\\
 w_2: f_2, \bot &\quad f_2: w_1 \bot\\
 w_3: f_1, \bot &\quad f_3: w_3 \bot\\
 w_4: f_3, \bot &\quad f_4: \bot\\
\end{align*}

\begin{figure}[t]
 \centering

 \begin{tikzpicture}[scale=0.7, transform shape, shorten >=1pt,->,draw=black!100, node distance=\layersep, thick]
 
 \tikzstyle{state}=[circle,draw=black!100,minimum size=25pt,inner sep=0pt,thick]

 \node[state] (F-1) at (0, 0) {$f_1$};
 \node[state] (F-2) at (-4, 0) {$f_2$};
 \node[state] (F-3) at (4, 0) {$f_3$};
 \node[state] (F-4) at (6, 0) {$f_4$};

 \node[state] (W-1) at (-2, 2) {$w_1$};
 \node[state] (W-2) at (-2, -2) {$w_2$};
 \node[state] (W-3) at (2, 2) {$w_3$};
 \node[state] (W-4) at (2, -2) {$w_4$};

 \path (W-1) edge (F-1);
 \path (W-2) edge (F-2);
 \path (W-3) edge (F-1);
 \path (W-4) edge (F-3);

 \path (F-1) edge (W-2);
 \path (F-2) edge (W-1);
 \path (F-3) edge (W-3);
 \path (F-4) edge [loop above] (F-4);

 \path (F-1) edge [dashed] (W-4);

 \end{tikzpicture}
 \caption{Round 1 of TTC. The solid lines represent workers and firms pointing to their top preferred agent truthfully. The dashed line represents a misreport by $f_1$}
 \end{figure}

 If all agents report truthfully, $w_1$ is matched with $f_1$. This violates IR as $\bot \succ_{f_1} w_1$ and thus the matching is not ex-ante stable. If $f_1$ misreports its preference as $f_1: w_4, w_3,\bot$, then $w_3$ is matched with $f_1$. Since $f_1$ is matched with a more preferred worker $w_3$ with $w_3 \succ_{f_1} \bot \succ_{f_1} w_1$, TTC is not strategyproof.

\end{example}
\section{Proof of Theorem ~\ref{thm:stability-violation}}\label{app:stv-proof}
\begin{proof}
Since $\mathit{stv}(g^\theta, \succ)\geq 0$ then $\mathit{STV}(g^\theta)=\mathbb{E}_{\succ} \mathit{stv}(g^\theta,\succ)=0$ if and only if $\mathit{stv}(g^\theta, \succ)= 0$ except on zero measure events. 
Moreover, $\mathit{stv}(g^\theta, \succ)= 0$ implies $\mathit{stv}_{wf}(g^\theta,\succ)=0$ for all $w\in W$, all $f\in F$. This is equivalent to no justified envy.
For firm $f$, this means $\forall w'\neq w$, $q^\succ_{wf} \leq q^\succ_{w'f}$ if $g^\theta_{w'f} > 0$ and $q\succ_{wf} \leq 0$ if $g^\theta_{\bot f} > 0$. Then there is no justified envy for a firm $f$. Analogously, there is no justified envy for worker $w$. If $g^\theta$ is {\em ex ante} stable, it trivially implies $STV(g^\theta)=0$ by definition.
\end{proof}

\section{Proof of Theorem~\ref{thm:regret0}}\label{app:sp-proof}
\begin{proof}
Since $\mathit{regret}(g^\theta, \succ)\geq 0$ then $\mathit{RGT}(g^\theta)=\mathbb{E}_{\succ} \mathit{regret}(g^\theta,\succ)=0$ if and only if 
$\mathit{regret}(g^\theta, \succ)= 0$ \sloppy except on zero measure events. 
Moreover, $\mathit{regret}(g^\theta, \succ)= 0$ implies $\mathit{regret}_w(g^\theta, \succ) = 0$ for any worker $w$ and $\mathit{regret}_f(g^\theta, \succ) = 0$ for any firm $f$. Thus, the maximum utility increase on misreporting is at most zero, and hence $g^\theta$ is ordinal SP. If $g^\theta$ is Ordinal-SP, it is also satisfies FOSD~\ref{thm:fosd} and it is straightforward to show that $\mathit{regret}(g^\theta, \succ)= 0$.
\end{proof}

\section{Training Details and Hyperparameters}\label{app:train-detail}

We use a neural network with $R = 4$ hidden layers with $256$ hidden units each for all our settings. We use the leaky ReLU activation function at each of these layers. To train our neural network, we use the Adam Optimizer with decoupled weight delay regularization (implemented as {\em AdamW} optimizer in PyTorch) We set the learning rate to $0.005$ for uncorrelated preferences setting and $0.002$ when $p_{corr} = \{0.25, 0.5, 0.75\}$. The remaining hyperparameters of the optimizer are set to their default values. We sample a fresh minibatch of 1024 profiles and train our neural networks for a total of 50000 minibatch iterations. We reduce the learning rate by half once at $10000^{th}$ iteration and once at $25000^{th}$ iteration. We report our results on 204800 preference profiles. For our training, we use a single Tesla V-100 GPU. For each setting, the neural network takes 4.6 hours to train.

\end{document}